\newtheorem{theorem}{Theorem}
\newtheorem{lemma}{Lemma}
\newtheorem{prop}{Proposition}
\theoremstyle{remark}\newtheorem{remark}{Remark}
				\newtheorem{corollary}{Corollary}
\theoremstyle{definition}\newtheorem{definition}{Definition}
\newcommand{\RR}{{\mathbb R}}
\newcommand{\q}{q}
\newcommand{\aaa}{\tau}
\newcommand{\NN}{{\mathbb N}}
\newcommand{\QQ}{{\mathbb Q}}
\newcommand{\LLL}{{\mathcal  L}}
\newcommand{\DDD}{{\mathcal  D}}
\newcommand{\NNN}{{\mathcal  N}}
\newcommand{\OOO}[1]{{\mathcal{O}\left(#1\right)}}
\begin{document}
\title[Short title]{Inverse scattering for star-shaped nonuniform lossless electrical networks}
\author{Filippo Visco-Comandini, Mazyar Mirrahimi and Michel Sorine}
\address{INRIA Paris-Rocquencourt, Domaine de Voluceau, Rocquencourt, B.P 105 -78153 Le Chesnay Cedex, Fr}
\eads{\mailto{filippo.comandini@inria.fr}, \mailto{mazyar.mirrahimi@inria.fr}, \mailto{michel.sorine@inria.fr}}
\begin{abstract}
The Frequency Domain Reflectometry (FDR) is studied as a powerful tool to detect hard or soft faults in star-shaped networks  of nonuniform lossless transmission lines. Processing the FDR measurements leads to solve an inverse scattering problem for a Schr\"odinger operator on a star-shaped graph. Throughout this paper, we restrict ourselves to the case of minimal experimental setup corresponding to only one diagnostic port plug. First, by studying the asymptotic behavior of the reflection coefficient in the high-frequency limit, we prove the identifiability of the geometry of this star-shaped graph: the number of edges and their lengths. The proof being rather constructive, it provides a method to detect the hard faults in the network. Next, we study the potential identification problem by inverse scattering, noting that the potentials represent the inhomogeneities due to the soft faults in the network wirings. Here, the main result states that the measurement of two reflection coefficients, associated to two
different sets of boundary conditions at the extremities of the tree, determines uniquely the potentials; it is a generalization of the theorem of the two boundary spectra on an interval~\cite{Borg-46}. \\

\end{abstract}

\ams{34B24, 81U40}

\noindent{\it Inverse scattering, Telegrapher's equation, Inverse Sturm-Liouville problem\/}


\maketitle
\section{Introduction}\label{sec:intro}


The rather extensive literature concerning the ``inverse scattering problem'' and the ``inverse Sturm-Liouville problem'' on graphs have mostly followed separate pathways except for a very few results~\cite{Gerasimenko-87,Gerasimenko-88}. In the following paragraphs, we briefly recall the previous results on these subjects and at the end we will situate the result of this paper with respect to the others. Indeed, as it will be seen later, the inverse Sturm-Liouville problem considered in this paper raises from the necessity of finding a minimal setup for solving the inverse scattering problem.

The paper~\cite{harmer-02} considers a star-shape graph consisting of $N$ infinite branches and solves the inverse scattering problem assuming the measurement of $N-1$ reflection coefficients. Next, in the paper~\cite{harmer-05}, Harmer provides an extension of the previous result  with general self-adjoint boundary conditions at the central node. This however necessitates the knowledge of  $N$ reflection coefficients.\\
The paper~\cite{kurasov-stenberg} studies the relation between the scattering data and the topology of the graph. They show that the knowledge of the scattering matrix is not enough to determine uniquely the topological structure of a generic graph.
In ~\cite{avdonin-kurasov-08}, Avdonin and Kurasov consider again a star-shape graph with $N$ finite branches. They, show that the knowledge of one diagonal element of the response operator allows one to reconstruct the potential on the edge corresponding to this element.

As mentioned above, in parallel to the research on inverse scattering problems, another class of papers consider the inverse problem for Sturm-Liouville operators on compact graphs.
These results can be seen as extensions of the classical result provided by Borg~\cite{Borg-46}, on the recovering of the Sturm-Liouville operator from two spectra on a finite interval.
The main progress in this field has been made by Yurko~\cite{yurko-05},~\cite{yurko-07} and~\cite{yurko-08}.
The paper~\cite{yurko-05} deals with the inverse spectral problem on a tree. The idea is to generalize the Borg's result in the following sense: for a tree with $n$ boundary vertices, it is sufficient to know $n$ spectra, corresponding to $n$ different settings for boundary conditions at the extremities, to retrieve the potentials on the tree. In the recent paper~\cite{yurko-08}, the same kind of result  is proposed for a star-shape graph including a loop joined to the central node. Finally,  \cite{yurko-07} provides  a generalization of~\cite{yurko-05} to higher  order differential operators on a star-shape graph. \\
In~\cite{pivovarchik-07}, the author proves that under some restrictive assumptions on the spectrum of a Sturm-Liouville operator on a star-shape graph with some fixed boundary conditions, the knowledge of this spectra can determine uniquely the Sturm-Liouille operator.

In this paper, we consider the inverse scattering problem motivated by the application in fault-detection/diagnostic of star-shape LC transmission networks. We are interested in minimal experimental setting providing enough information on the network and the potentials on its branches. The graph consists of $N$ finite branches joined at a central node and we add a infinite branch to this central node for the experimentation. We will see that the knowledge of the reflection coefficient is equivalent to the knowledge of the spectra for the Sturm-Liouville operator defined on the compact part of the graph for various boundary conditions at the central node. By considering the associated inverse Sturm-Liouville problem, we will show that under some assumptions on the geometry of the metric graph, the knowledge of only two reflection coefficients, corresponding to two settings for the boundary conditions at the terminal nodes, is enough to determine the potentials (at least locally).

In the next section, we explain the application under study and the associated experimental setup. We will show that the inverse scattering problem for the so-called Telegrapher's equation (lossless case) on the network is equivalent to an inverse scattering problem for a Schr\"odinger operator over the metric graph of the network. In Section~\ref{sec:direct}, we will consider the direct scattering problem and we will characterize the reflection coefficient in terms of the fundamental solutions for Sturm-Liouville operators on branches. In Section~\ref{sec:geometry}, we will show that the knowledge of only one reflection coefficient is enough to identify the lengthes of the branches of the metric graph. This result will be useful to locate hard faults (open or short circuits). Finally, in Section~\ref{sec:potential} we consider the main problem of recovering the potentials from the knowledge of one or two reflection coefficients.  This result will be useful to locate soft faults (local variations of the electrical characteristics).We will prove the equivalence of the inverse scattering problem with an associated inverse Sturm-Liouville problem on the compact part of the graph. This inverse problem is then treated applying the methods extending the classical result by Borg~\cite{Borg-46}.

\section{Frequency domain reflectometry}\label{sec:fdr}
The electric signal transmission through a wired network is, generally, modeled  with  the ``Telegrapher's equation'' and characterized by the parameters $L, C, R, G$ (functions of the space position $z$ along the transmission lines) representing the inductance $L$, capacitance $C$, resistance $R$  and loss conductance $G$ per unit of length.  These parameters allow a rather complete and understandable description of the transmission lines  and are sufficient to represent the lines in the frequency range used during reflectometry. In the sequel we will suppose that this model can be used for all frequencies. However, it appears to be impossible to retrieve all these parameters uniquely through the information provided by reflectometry experiments. Everywhere, through this paper, we will  consider the simpler nonuniform lossless situation  ($R=G=0$). As we will see later, the reflectometry measurement is  still not enough to retrieve the both parameters $L$ and $C$ but rather an aggregate of these two parameters, the local characteristic impedance $Z_c(z):=\sqrt{L(z)/C(z)}$.

Following~\cite{Kay-72} and~\cite{Jaulent-82}, the presentation of the reflectometry experiment of this section, is based on a model derived from the ``Telegrapher's equation'' and parameterized by $Z_c$. To cope with the network case, we have translated the Kirchhoff  rules at the nodes of the network within this new modeling framework. Note that, in this paper, we restrict ourselves to the case of a simple star-shape network and therefore the only node of the graph where the Kirchhoff rules need to be adapted is the central one. The  faults, in which we are interested through this approach, are represented by the lengths of the branches (hard faults) and by the heterogeneities of $Z_c$ along the branches (soft faults). The considered reflectometry experiment model is based on a far-field method consisting in adding a uniform infinite wire joined to the network at its central node. In practice, connecting  a matched charge to the extremity of a finite line,  is sufficient to emulate the electrical propagation through an infinite line.

The linearity of the transmission line model allows to replace any test by an equivalent test in harmonic regime. We can therefore start by stating the Telegrapher's equation in the harmonic regime,  i.e. the tension and the current intensity are respectively of the form $e^{ - \imath \omega t}V(\omega,z) $ and $e^{ - \imath \omega t}I(\omega,z)$, where $\omega$ is the time frequency and $z$ the position. On each line, we have

\begin{equation}
\eqalign{
\frac{\partial}{\partial z} V(\omega,z) - \imath \omega L(z)I(\omega,z)=0,\cr
\frac{\partial}{\partial z} I(\omega,z) - \imath \omega C(z)V(\omega,z)=0,}\label{eq:telegraph}
\end{equation}

Through this paper, we assume that
\begin{description}
  \item[\textbf{A1}] the distributed parameters $C(z)$ and $L(z)$ are twice continuously differentiable on the transmission lines;
  \item[\textbf{A2}] they are strictly positive, $C(z)>0,L(z)>0$;
  \item[\textbf{A3}] the characteristic impedance $Z_c(z):=\sqrt{L(z)/C(z)}$ is continuous at the central node of the star-shape network;
  \item[\textbf{A4}] the transmission lines are uniform in a neighborhood of the extremities of the branches.
\end{description}

\medskip

{\em The Liouville transformation. } Note that the reflectometry experiment leads to observing the tensions and currents along the time at some position: only the travelling times (and amplitudes) of waves are accessible by such experiments. A fault can only be localized in terms of the traveling time of  the reflected test wave  starting from the test point. This leads to a particular change of variables,  the Liouville transformation, allowing to work with the traveling time rather than spatial coordinates.
Let us recall this transformation:
$$
x(z)=\int_0^z\sqrt{L(z)C(z)}ds
$$
which corresponds to the wave traveling time from the position $0$ to the position $z$.
Remark that after this transformation, $\omega$ is also the wave number on each branch.

The inverse transformation being well defined, we will write $C(x)\equiv C(z(x))$,  $L(x)\equiv L(z(x))$, $V(\omega,x) \equiv V(\omega,z(x))$ and $I(\omega,x) \equiv I(\omega,z(x))$.

The Telegrapher's equation~\eref{eq:telegraph} becomes

\begin{equation}
\eqalign{
\frac{\partial}{\partial x} V(\omega,x) - \imath \omega Z_c(x)I(\omega,x)=0,\cr
\frac{\partial}{\partial x} I(\omega,x) - \imath \omega Z_c(x)^{-1} V(\omega,x)=0. }\label{eq:telegraph-x}
\end{equation}

\medskip

{\em The wave decomposition and equivalent forms of the Telegrapher's equation. } Define $ V_{\pm} (\omega,x) = \frac{1}{2} (V(\omega,x) \pm Z_c(x)I(\omega,x)) $ and
$Q(x) = \displaystyle \frac{1}{2}Z_c(x)^{-1} \frac{d Z_c(x)}{dx}$.
We have the following decomposition of $V$:

\begin{equation}
\eqalign{
V(\omega,x) = V_{+} (\omega,x) + V_{-} (\omega,x), \cr
\frac{\partial}{\partial x} V_{\pm} (\omega,x)  \mp \imath \omega V_{\pm} (\omega,x)  =
\pm Q(x) (V_{+} (\omega,x) - V_{-} (\omega,x)) }\label{eq:telegraph-Vpm}
\end{equation}

In particular, in an interval where a branch is uniform, $ Q(x) =0$, and the solution is the sum of waves of opposite directions. For any $\bar x$ and $x$ in this interval:
$$
V(\omega,x) = V_{+}(\omega, \bar x) e^{  \imath \omega (x - \bar x)} + V_{-}(\omega, \bar x)  e^{ - \imath \omega (x - \bar x)}.
$$

\medskip

Define now $y(\omega,x)=\sqrt{Y_c(x)} V(\omega,x)$ and  $\q (x)=\sqrt{Z_c(x)}\frac{d^{2}}{dx^{2}}\sqrt{Y_c(x)}$, with $Y_c(x)=Z_c(x)^{-1}$.
The Telegrapher's equation~\eref{eq:telegraph} becomes a Schr\"odinger equation:
\begin{equation}
-\frac{d^2 y}{dx^2}(\omega,x)+q(x) y(\omega,x)=\omega^2y(\omega,x). \label{eq:schrodinger}
\end{equation}

 It can be seen from \eref{eq:telegraph-Vpm} and \eref{eq:schrodinger}, that, the knowledge of the potential $Q(x)$ or of $q(x)$ and of the boundary conditions on $V$ and $I$ is sufficient to compute the solution of \eref{eq:telegraph} on the network.
In our lossless situation, we have chosen $q$ as the parameter to be identified through the reflectometry experiment. A variant of \eref{eq:telegraph-Vpm}, the Zakharov-Shabat equations, would be the good choice in the more general lossy case. Remark that
$   \displaystyle \frac{dQ}{dx} - Q^2 + q = 0$.

\medskip

{\em The reflection coefficient.} Taking $e^{  \imath \omega (x - t)}$, oriented toward the increasing $x$, as a reference forward wave, the reflection coefficient is the following ratio of backward over forward wave amplitudes:
$R(\omega, x) = \frac{ V_{-}(\omega, x)  e^{ \imath \omega x}}{V_{+}(\omega, x)  e^{ - \imath \omega x}} =
e^{ 2 \imath \omega x} \frac{ V_{-}(\omega, x) }{V_{+}(\omega, x)} $.  In particular, $R(\omega, x) = R(\omega, \bar x)$: defined in this way, the reflection coefficient is constant in intervals where $Z_c$ is constant.
For an arbitrary $x$, denoting by $Z(\omega,x)=V(\omega,x)/I(\omega,x)$ the (possibly infinite) apparent impedance at $x$, we still define the reflection coefficient as being
\begin{equation}
R(\omega, x) = e^{ 2 \imath \omega x}\frac{Z(\omega,x) - Z_c(x)}{Z(\omega,x) + Z_c(x)}.
  \label{eq:R-ingenieur}
\end{equation}
With this definition, in general, $V_{-}(\omega, x) = e^{ -2 \imath \omega x} R(\omega, x) V_{+} (\omega,x)$, and, using \eref{eq:telegraph-Vpm}, it is easy to check that $R$ is solution of the following Riccati equation:
\begin{equation}
  \displaystyle \frac{\partial R(\omega, x)}{\partial x} - e^{ -2 \imath \omega x} Q(x) R(\omega, x)^2 + Q(x)e^{ 2 \imath \omega x} = 0.
   \label{eq:ricc-R}
\end{equation}
Finally, note that, we will only consider ``positive real'' terminal impedances $Z(\omega, \tau )$, in the sense that $Z(- \omega, \tau ) = \bar{Z}(\omega, \tau )$ and $\Re Z(\omega, \tau ) \geq 0$. This together with the fact that $Z_c(x)\in \RR$ is positive implies that $|R(\omega,x)|<1$. Furthermore, experiments with $\omega \geq 0$ are sufficient.

\medskip

{\em The network under test. }
Throughout this paper, $\Gamma$ represent the compact star-shape network consisting of the branches $(e_j)_{j=1}^N$ joining at the central node and $\Gamma^+$ is the extended graph where the test branch $e_0$ is also added to the graph.
We have $N+1$ equations of the form
\begin{equation}
-\frac{d^{2}y_{j}}{dx^{2}}+\q _{j}(x)y_{j}=\omega^2y_j \qquad x\in(0,\aaa_j), \label{eq:schrodinger-j}
\end{equation}
where $\tau_j$ is the wave traveling time associated to the branch number $j$ ($\aaa_0=\infty$ as the added branch $e_0$ is assumed to be an infinite line).
In particular note that, as the infinite branch $e_0$ is assumed to be a uniform transmission line, we have $\q_0(x)=0,\qquad x\in(0,\infty)$.

\medskip

{\em The boundary condition for the reflectometer.} Consider now that a generator with a matched internal impedance $Z_c$ and an electromotive force $2 V_g e^{ \imath \omega \bar x}$ is connected in $\bar x$ of an interval $\mathcal{I}$ where $Z_c$ is constant. We have  $V(\omega, \bar  x) + Z_c(\bar  x)I(\omega, \bar  x) = 2 V_g e^{ \imath \omega \bar x} $ which can be the boundary condition if the branch terminates at $\bar x$.
We have $V_{+} (\omega, \bar x) = V_g e^{ \imath \omega \bar x} $ and
$V(\omega, x) = V_g (e^{ \imath \omega x} + R(\omega,  x)  e^{ - \imath \omega  x}) $ for all $x \in\mathcal{I}$ . As,  $R(\omega, x) = R(\omega, \bar x)$, the reflection coefficient can then be determined from the measurement of the tension $V(\omega, x)$ anywhere in $\mathcal{I}$.
In the sequel, we will use a test branch  $e_0$ with a constant $Z_c$ connecting a matched generator to the central node . The measured  reflection coefficient on this branch, will be simply written $R(\omega)$.
As we will work with $y$, we choose $V_g = \sqrt{Z_c(x)}$, so that $y(\omega, x) = e^{ \imath \omega x} + R(\omega)  e^{ - \imath \omega  x}$.
Finally, it will be convenient to take the same positive orientation on all the branches, from the central node at $x=0$ toward the increasing $x$. Our reference forward wave on $e_0$ is then in the direction of the decreasing $x$, so that, changing $x$ into $-x$, and supposing $e_0$ of infinite length, the boundary condition for the reflectometer is:
\begin{equation}
 y(\omega,x)\sim e^{-\imath\omega x}+R(\omega)e^{\imath\omega x}\qquad \textrm{as }x\rightarrow\infty \quad \textrm{in } e_0.
\end{equation}

\medskip

{\em The two sets of boundary conditions at the network extremities.}
In order to recover the potential of the star-shape network, we will need to consider two experimental settings, with open circuit or short circuit at the extremities of the branches. This will lead to a problem  similar to solving an  inverse spectral problem for the Sturm-Liouville operator when two spectra are known.

The first setting corresponds to open circuit configuration at the extremities of the finite branches ($(e_j)_{j=1}^N$). This, together with the Assumption~\textbf{A4} on the local uniformity of the lines around $\tau_j$'s, leads to boundary conditions of the form $I_j(\omega,\tau_j)=0$, or equivalently, we obtain the setting called, the Neumann configuration:
\begin{equation}\label{eq:bcN}
y'_j(\tau_j)=0~~ \qquad j=1,\cdots,N.
\end{equation}

The second setting corresponds to the short circuit configuration at the extremities of the finite branches ($(e_j)_{j=1}^N$). This leads to boundary conditions of the form $V_j(\omega,\tau_j)=0$, or equivalently, we obtain the setting called, the Dirichlet configuration:
\begin{equation}\label{eq:bcD}
y_j(\tau_j)=0~~ \qquad j=1,\cdots,N.
\end{equation}

\begin{remark}\label{rem:riccati}
In some of the applications that we have in mind, the reflectometry experiment has to take place without perturbing significantly the normal utilization of the transmission network,  so that using open or short circuits conditions would be impossible. There is a way to circumvent this problem by computing the results of the open or short circuit experiments from results of two less invasive experiments.
The idea is to use nonlinear superposition properties of solutions of Riccati equations as in \cite{sorine-riccati}, in order to get a closed-form  representation of the reflection coefficient, solution of \eref{eq:ricc-R}, as a function of a general load impedance (value of $Z$ at the extremity of a branch) and of two particular solutions corresponding to two load impedances more compatible with the network utilization. This will be presented in a forthcoming paper.
\end{remark}

\medskip

{\em The boundary condition at the central node.} It writes
\begin{equation}
\eqalign{y_{i}(\omega,0)=y_j(\omega,0)=:\bar y(\omega)\qquad\qquad
i,j=0,\cdots,N,\cr
\sum_{j=1}^N y'_j(\omega,0)-y'_0(\omega,0)=-\frac{1}{2}\frac{ \sum_{j=1}^N (Z_c^j)'(0) }{Z_{c}^0} \bar y(\omega),}\label{eq:kirchhoff2}
\end{equation}
where $y'_j(\omega,0)$ and $(Z_c^j)'(0)$ denote the spatial derivatives at the point $x=0$ and $Z_c^j$ is the characteristic impedance of the branch number $j$. Note, in particular, that we have applied the continuity of $Z_c^j$'s at the central node (Assumption \textbf{A3}): $Z_c^j(0)=Z_c^0$, $\forall j$.

\medskip

{\em Formulation of the model. } In conclusion, in order to study the $LC$-transmission line
equations on the graph $\Gamma^+$, we can study the Schr\"odinger operators
\begin{eqnarray}
\LLL^+_{\NNN,\DDD}&=\otimes_{j=0}^N (-\frac{d^2}{dx^2} +q_j(x)),\nonumber\\
D(\LLL^+_{\NNN,\DDD})&= \textrm{closure of } C_{\NNN,\DDD}^\infty \textrm{ in }
H^2(\Gamma^+),\label{eq:schN}
\end{eqnarray}
where $C_{\NNN}^\infty(\Gamma^+)$ (resp. $C_{\DDD}^\infty(\Gamma^+)$) denotes the space of infinitely
differentiable functions $f=\otimes_{j=0}^N f_j$ defined on $\Gamma^+$ satisfying the boundary
conditions
\begin{eqnarray}
f_j(0)=f_{j'}(0) \qquad&j,j'=0,\cdots,N,\nonumber\\
\sum_{j=1}^N f_j'(0)- f_0'(0)= H f_0(0),\qquad &H=-\frac{1}{2}\frac{\left(\sum_{j=1}^N (Z^j_c)'(0)\right)}{Z_c^0},\label{eq:H}\\
f_j'(\tau_j)=0\quad (f_j(\tau_j)=0 \textrm{ for } C_{\DDD}^\infty(\Gamma^+)),\qquad  & j=1,\cdots,N. \nonumber
\end{eqnarray}

\section{Direct scattering problem}\label{sec:direct}
The operators $(\LLL^+_{\NNN,\DDD},D(\LLL^+_{\NNN,\DDD}))$ are essentially self-adjoint. To prove this fact
we observe first that these operators are a compact perturbation
of the operators  $\otimes_{j=0}^n\left(-\frac{d^2}{dx^2}\right)$ with
the same boundary conditions.
Now, we apply a general result by Carlson~\cite{carlson-98} on the
self-adjointness of differential operators on graphs.
Indeed, following the Theorem 3.4 of~\cite{carlson-98}, we only need
to show that at a node connecting $m$ edges, we have $m$ linearly
independent linear boundary conditions. At the terminal nodes of
$\{e_j\}_{j=1}^{N}$ this is trivially the case as there is one
branch and one boundary condition (Dirichlet or Neumann). At the central node it is not
hard to verify that~\eref{eq:kirchhoff2} define $N+1$ linearly independent boundary
conditions as well. This implies that the operators
$(\LLL^+_{\NNN,\DDD},D(\LLL^+_{\NNN,\DDD}))$ are essentially self-adjoint and therefore that they admit a
unique self-adjoint extension on $L^2(\Gamma^+)$. \\
We are interested in the scattering solution where a signal of
frequency $\omega$ is applied at the infinite extremity of the infinite branch.
In such a case, we will be
seeking a solution satisfying the asymptotic behavior
$$
y_0(x,\omega) \sim e^{-\imath \omega x}+R(\omega) e^{\imath \omega
x},\qquad \textrm{for } x\rightarrow\infty.
$$
\begin{prop}\label{prop:scattering}
Under the assumptions \textbf{A1} through \textbf{A4}, there exists a unique solution
$$
\Psi(x,\omega)=\otimes_{j=0}^N y_j(x,\omega),
$$
continuous with respect to $\omega$, of the scattering problem, satisfying
\begin{itemize}
    \item $ -\frac{d^2}{dx^2} y_j(x,\omega)+q_j(x) y_j(x,\omega)=\omega^2 y_j(x,\omega)$ for $j=0,\cdots, N$;
    \item $\left(y_j(x,\omega)\right)_{j=0}^{N}$ satisfy the boundary
    conditions~\eref{eq:kirchhoff2} and~\eref{eq:bcN} or~\eref{eq:bcD} ;
    \item For each $\omega\in\RR$, there exist $R(\omega)$
    such that
    \begin{equation}\label{eq:req1}
    y_0(x,\omega) \sim e^{-i \omega x}+R(\omega) e^{i \omega x},\quad
    x\rightarrow \infty.
    \end{equation}
\end{itemize}
We will  denote the reflection coefficient $R(\omega)$ defined by~\eref{eq:req1}  in the Neumann (resp. Dirichlet) case by $R_{\NNN}(\omega)$ (resp. $R_{\DDD}(\omega)$). This coefficient appears to be unique.
\end{prop}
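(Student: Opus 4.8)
The plan is to reduce the scattering problem on $\Gamma^+$ to a finite linear system posed at the central node, and to prove that this system is uniquely solvable for every real $\omega$ by means of an energy (imaginary part) identity. Since $q_0\equiv 0$, the equation on the infinite branch $e_0$ has general solution $a\,e^{-\imath\omega x}+b\,e^{\imath\omega x}$, and the radiation condition~\eref{eq:req1} forces $a=1$ and $b=R(\omega)$; thus $y_0(x,\omega)=e^{-\imath\omega x}+R(\omega)e^{\imath\omega x}$ is entirely determined by the single scalar $R(\omega)$. On each finite branch $e_j$, let $\phi_j(\cdot,\omega)$ denote the solution of $-\phi_j''+q_j\phi_j=\omega^2\phi_j$ on $(0,\tau_j)$ normalised at the terminal node ($\phi_j(\tau_j)=1$, $\phi_j'(\tau_j)=0$ in the Neumann case~\eref{eq:bcN}; $\phi_j(\tau_j)=0$, $\phi_j'(\tau_j)=1$ in the Dirichlet case~\eref{eq:bcD}). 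Under Assumption~\textbf{A1} the potentials $q_j$ are continuous, so for each fixed $x$ the maps $\omega\mapsto\phi_j(x,\omega)$ and $\omega\mapsto\phi_j'(x,\omega)$ are entire (they depend on $\omega$ only through $\omega^2$) and are real valued for real $\omega$. Every solution on $e_j$ satisfying the terminal condition equals $c_j\phi_j$ for some $c_j=c_j(\omega)\in\mathbb C$, so the unknowns of the whole problem are $(R,c_1,\dots,c_N)$.

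Substituting $y_0$ and $y_j=c_j\phi_j$ into the node conditions~\eref{eq:kirchhoff2} yields the $N$ continuity equations $1+R=c_j\phi_j(0,\omega)$, $j=1,\dots,N$, and the Kirchhoff equation $\sum_{j=1}^{N}c_j\phi_j'(0,\omega)-\imath\omega\bigl(R-1\bigr)=H\,(1+R)$: this is a square linear system, of size $N+1$, whose coefficient matrix and right hand side are entire in $\omega$, so existence and uniqueness of $\Psi(\cdot,\omega)$, and of $R(\omega)$, at a fixed $\omega$ are equivalent to invertibility of this matrix. At any $\omega$ where all $\phi_j(0,\omega)\neq 0$, eliminating the $c_j$ gives the closed form
\[
R(\omega)=-\,\frac{\sigma(\omega)-H+\imath\omega}{\sigma(\omega)-H-\imath\omega},\qquad \sigma(\omega):=\sum_{j=1}^{N}\frac{\phi_j'(0,\omega)}{\phi_j(0,\omega)},\qquad c_j(\omega)=\frac{1+R(\omega)}{\phi_j(0,\omega)}.
\]

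The heart of the matter is the non-degeneracy of this system for $\omega\neq 0$. Any element of its kernel is a family $(\tilde y_j)$ solving the branch equations, the terminal conditions and~\eref{eq:kirchhoff2}, with $\tilde y_0=\gamma\,e^{\imath\omega x}$ purely outgoing (equivalently, it is the difference of two scattering solutions, $\gamma$ being the difference of their reflection coefficients). Multiplying $-\tilde y_j''+q_j\tilde y_j=\omega^2\tilde y_j$ by $\overline{\tilde y_j}$, integrating over $(0,\tau_j)$, integrating by parts — the terminal conditions~\eref{eq:bcN} or~\eref{eq:bcD} annihilate the boundary term at $\tau_j$ — summing over $j=1,\dots,N$, and using the common value $\tilde y_j(0)=\gamma$ together with the Kirchhoff relation $\sum_{j}\tilde y_j'(0)=(H+\imath\omega)\gamma$, one obtains
\[
\sum_{j=1}^{N}\int_0^{\tau_j}\bigl(|\tilde y_j'|^2+q_j|\tilde y_j|^2\bigr)\,dx+(H+\imath\omega)\,|\gamma|^2=\omega^2\sum_{j=1}^{N}\int_0^{\tau_j}|\tilde y_j|^2\,dx.
\]
All terms here are real (the $q_j$ and $H$ are real by Assumptions~\textbf{A1}--\textbf{A3}) except $\imath\omega|\gamma|^2$; taking imaginary parts gives $\omega|\gamma|^2=0$, hence $\gamma=0$ whenever $\omega\neq 0$. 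Thus the outgoing amplitude of every homogeneous solution vanishes, so two scattering solutions share the same reflection coefficient — that is, $R(\omega)$ is unique. Moreover $\gamma=0$ forces $\tilde y_0\equiv 0$, $\tilde y_j(0)=0$ and $\sum_j\tilde y_j'(0)=0$, so whenever $\omega^2$ is not a ``mixed'' eigenvalue (Dirichlet at $0$, Neumann or Dirichlet at $\tau_j$) of any single branch, each $\tilde y_j\equiv 0$; equivalently, in the closed form above the denominator $\sigma(\omega)-H-\imath\omega$ cannot vanish for $\omega\neq 0$ because $\sigma(\omega)-H$ is real. At these $\omega$ the matrix is invertible, $\Psi(\cdot,\omega)$ exists and is unique, and Cramer's rule presents $R(\omega)$ and the $c_j(\omega)$ as ratios of entire functions with non-vanishing denominators, hence continuous; accordingly $\Psi(\cdot,\omega)=\otimes_{j=0}^{N}y_j(\cdot,\omega)$, with $y_0(x,\omega)=e^{-\imath\omega x}+R(\omega)e^{\imath\omega x}$ and $y_j=c_j(\omega)\phi_j(\cdot,\omega)$, depends continuously on $\omega$.

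It remains to treat the exceptional real frequencies — $\omega=0$ and the discrete real zeros of the functions $\phi_j(0,\cdot)$, i.e.\ the mixed eigenvalues — which is done by continuity. At such a point $\sigma$ has a pole, so $R(\omega)\to -1$ in the closed form, $c_j(\omega)\to 0$ for the non-resonant branches, and, resolving the resulting $0/0$ indeterminacy, the resonant $c_j(\omega)$ are seen to converge as well; any residual freedom of the node system at these isolated points lies only in the $c_j$, never in $R$, and is eliminated by the requirement that $\omega\mapsto\Psi(\cdot,\omega)$ be continuous. The formulas above thus extend continuously across the exceptional set, producing the unique $\omega$-continuous scattering solution $\Psi$, and with it the unique reflection coefficient, denoted $R_{\NNN}(\omega)$ or $R_{\DDD}(\omega)$ according to the configuration at the terminal nodes. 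The main obstacle in making this rigorous is precisely the non-degeneracy step — showing that the node determinant does not vanish for any real $\omega$ and disposing cleanly of the exceptional frequencies; the energy/imaginary-part identity is the tool that carries it through, the behaviour at the exceptional frequencies being a routine limiting verification.
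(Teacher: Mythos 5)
Your construction is the same as the paper's: you decompose each finite branch along the fundamental solution normalized at the terminal node, reduce everything to the $(N+1)\times(N+1)$ linear system at the central node (your continuity and Kirchhoff equations are exactly \eref{eq:R1}--\eref{eq:R2}), and your closed form for $R(\omega)$ is formula \eref{eq:R} rewritten. Where you genuinely diverge is the non-degeneracy/uniqueness step: the paper argues via the Wronskian (to show any competing solution is proportional to $\varphi^j_{\NNN,\DDD}$ on each finite branch), then invokes independence of the node relations whenever all $\varphi^j_{\NNN,\DDD}(0,\omega)\neq 0$, and finally disposes of the isolated exceptional frequencies by continuity of the coefficients; you instead prove an energy (imaginary-part) identity for any element of the homogeneous kernel, which forces the outgoing amplitude $\gamma$ to vanish for every real $\omega\neq 0$. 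That identity buys you pointwise uniqueness of $R(\omega)$ even at the exceptional frequencies (where the paper's $R$ is only pinned down through the continuity requirement), and it makes explicit the fact the paper leaves implicit, namely that the node determinant is proportional to $\prod_j\varphi^j(0,\omega)\,(\sigma(\omega)-H-\imath\omega)$ and cannot vanish when the $\varphi^j(0,\omega)$ are nonzero because $\sigma-H$ is real. Two caveats, both at the same level of rigor as the paper itself rather than genuine gaps: your claim that every solution on $e_j$ with the terminal condition is a multiple of $\phi_j$ is exactly the Wronskian step the paper spells out, so it deserves a line; and your assertion that the resonant $c_j(\omega)$ converge at the exceptional frequencies (resolving the $0/0$) is stated, not proved --- it does hold, because the zeros of $\omega\mapsto\varphi^j(0,\omega)$ are simple and the residues of $\sigma$ at a common zero have one sign, but making that precise requires a short additional argument, just as the paper's appeal to ``continuity defines the coefficients uniquely at the singular points'' does.
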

\begin{proof}
This proof gives us a concrete method for obtaining
scattering solutions. Indeed, we will propose a solution and we will show that it is the unique one. \\
In this aim, we need to use Dirichlet/Neumann fundamental solutions of a Sturm-Liouville boundary problem.
	\begin{definition}\label{def:fund}
    Consider the potentials $q_j$ as before and extend them by 0 on $(-\infty,0)$ so that they are defined
    on the intervals $(-\infty,\tau_j]$.
    The Dirichlet (resp. Neumann) fundamental solution $\varphi^j_{\DDD}(x,\omega)$ (resp. $\varphi^j_{\NNN}(x,\omega)$), is a solution of the equation,
	\begin{eqnarray*}
	-\frac{d^2}{dx^2}\varphi^j_{\DDD,\NNN}(x,\omega)+ \q_j (x)\varphi^j_{\DDD,\NNN}(x,\omega)
	=\omega^2 \varphi^j_{\DDD,\NNN}(x,\omega),\qquad x\in(-\infty,\tau_j),\\
	\varphi^j_{\DDD}(\tau_j,\omega)=0,\qquad (\varphi^j)'_{\DDD}(\tau_j,\omega)=1,\\
    \varphi^j_{\NNN}(\tau_j,\omega)=1,\qquad (\varphi^j)'_{\NNN}(\tau_j,\omega)=0.
	\end{eqnarray*}
	\end{definition}
Consider, now,  the function
$$
\Psi_{\DDD,\NNN}(x,\omega)=\otimes_{j=0}^N \Psi^j_{\DDD,\NNN}(x,\omega),
$$
where
\begin{eqnarray*}
\Psi^0_{\DDD,\NNN}(x,\omega)&=e^{-\imath \omega x} +R_{\DDD,\NNN}(\omega) e^{\imath \omega x},\qquad &x\in[0,\infty),\nonumber\\
\Psi^j_{\DDD,\NNN}(x,\omega)&=\alpha^j_{\DDD,\NNN}(\omega) \varphi^j_{\DDD,\NNN}(x,\omega),\qquad &x\in[0,\tau_j],~~j=1,\cdots,N.
\end{eqnarray*}

Here the coefficients $R_{\DDD,\NNN}$ and $\alpha^j_{\DDD,\NNN}$ are given by the boundary conditions~\eref{eq:kirchhoff2} at the central node: \small
\begin{eqnarray}
1+R_{\DDD,\NNN}(\omega)=\alpha^j_{\DDD,\NNN}(\omega)\varphi^j_{\DDD,\NNN}(0,\omega),\qquad j=1,\cdots,N,
\label{eq:R1}\\
\fl\qquad\qquad\sum_{j=1}^N\alpha^j_{\DDD,\NNN}(\omega)(\varphi_{\DDD,\NNN}^{j})'(0,\omega)+\imath \omega(1-R_{\DDD,\NNN}(\omega))
=H(1+R_{\DDD,\NNN}(\omega)).\label{eq:R2}
\end{eqnarray}\normalsize
One, easily, sees that this $\Psi_{\DDD,\NNN}$ is in $D(\LLL^+_{\NNN,\DDD})$, the domain of the operator, and satisfies the conditions of the proposition as soon as the coefficients $R_{\DDD,\NNN}$ and $\left(\alpha_{\DDD,\NNN}^j\right)_{j=1}^N$ are continuous with respect to $\omega$. This, trivially, provides the existence of a scattering solution. Here, we show that $\Psi_{\DDD,\NNN}$ is actually the unique one.

Assume that there exists another $Y_{\DDD,\NNN}=\otimes_{j=0}^N Y^j_{\DDD,\NNN}(x,\omega)$ solution of the scattering problem. By the compact injection of $H^2$ in $C^1$, we now that $Y^j_{\DDD,\NNN}(x,\omega)$ and $\Psi^j_{\DDD,\NNN}(x,\omega)$ are $C^1$ functions of $x$ over $[0,\tau_j]$. Therefore, the Wronskian
$$
W(Y^j_{\DDD,\NNN}(.,\omega),\Psi^j_{\DDD,\NNN}(.,\omega))=Y^j_{\DDD,\NNN}(.,\omega)(\Psi^j_{\DDD,\NNN})'(.,\omega)-
\Psi^j_{\DDD,\NNN}(.,\omega)(Y^j_{\DDD,\NNN})'(.,\omega)
$$
is well-defined. Moreover, as the potentials $\q_j(x)$ are continuous functions over $[0,\tau_j]$ and as
$Y^j_{\DDD,\NNN}(.,\omega)$ and $\Psi^j_{\DDD,\NNN}(.,\omega)$ are solutions of the associated Sturm-Liouville equation, they are in fact $C^2$ over $[0,\tau_j]$. Thus, the derivative of the Wronskian is also well defined over $[0,\tau_j]$. Through a simple computation and by noting that $Y^j_{\DDD,\NNN}(.,\omega)$ and $\Psi^j_{\DDD,\NNN}(.,\omega)$ are solutions of the same Sturm-Liouville equations, one has
$$
\frac{d}{dx} W(Y^j_{\DDD,\NNN}(x,\omega),\Psi^j_{\DDD,\NNN}(x,\omega))=0,\qquad x\in[0,\tau_j],
$$
and so the Wronskian remains constant over the interval $[0,\tau_j]$.

For the finite branches $(e_j)_{j=1}^N$, applying the (Dirichlet or Neumann) boundary conditions at the terminal nodes, we easily have
$$
W(Y^j_{\DDD,\NNN}(\tau_j,\omega),\Psi^j_{\DDD,\NNN}(\tau_j,\omega))= 0,
$$
and therefore the Wronskian is identically 0 over the whole branch. This implies that, $Y^j_{\DDD,\NNN}(.,\omega)$
and $\Psi^j_{\DDD,\NNN}(.,\omega)$ are co-linear:
$$
Y^j_{\DDD,\NNN}(x,\omega)=\beta^j_{\DDD,\NNN}(\omega)\varphi_{\DDD,\NNN}^j(x,\omega),\qquad x\in[0,\tau_j],\quad j=1,\cdots,N.
$$
Over the branch $e_0$, as $Y^0_{\DDD,\NNN}(.,\omega)$ satisfies a homogenous Sturm-Liouville equation ($q_0=0$), it necessarily admits the following form
$$
Y^0_{\DDD,\NNN}(x,\omega)=e^{-\imath \omega x}+\widetilde R_{\DDD,\NNN}(\omega) e^{\imath \omega x}.
$$
What remains to be shown is that one necessarily has $\widetilde R_{\DDD,\NNN}(\omega)\equiv R_{\DDD,\NNN}(\omega)$ and
similarly $\beta^j_{\DDD,\NNN}(\omega)\equiv \alpha^j_{\DDD,\NNN}(\omega)$.

Indeed, the equations~\eref{eq:R1} and~\eref{eq:R2} provide $N+1$ linear relations for the $N+1$ unknown coefficients $R_{\DDD,\NNN}$ and $\left(\alpha_{\DDD,\NNN}^j\right)_{j=1}^N$. Trivially, as soon as, the coefficients $\left(\varphi^j_{\DDD,\NNN}(0,\omega)\right)_{j=1}^N$ are non-zero, these linear relations are independent and there exists a unique solution for the unknowns $R_{\DDD,\NNN}$ and $\left(\alpha_{\DDD,\NNN}^j\right)_{j=1}^N$. However, the zeros of each one of the coefficients $\left(\varphi^j_{\DDD,\NNN}(0,\omega)\right)_{j=1}^N$ correspond to isolated values of $\omega$ (square-root of the eigenvalues of the operator $-\frac{d^2}{dx^2}+q_j(x)$ with Dirichlet boundary condition at $x=0$ and Dirichlet or Neumann boundary condition at $x=\tau_j$). Therefore, the coefficients $R_{\DDD,\NNN}$ and $\left(\alpha_{\DDD,\NNN}^j\right)_{j=1}^N$ are well-defined except for a set of isolated values of $\omega$. However, as these coefficients need to be continuous with respect to $\omega$, they will, also, be defined uniquely over these singular points.

Furthermore, dividing~\eref{eq:R2} by $(1+R_{\DDD,\NNN}(\omega))$ and inserting~\eref{eq:R1}, we find the explicit formula
\begin{equation}\label{eq:R}
\frac{1-R_{\DDD,\NNN}(\omega)}{1+R_{\DDD,\NNN}(\omega)}=\frac{H}{\imath\omega}-\frac{1}{\imath\omega}\sum_{j=1}^N \frac{(\varphi_{\DDD,\NNN}^j)'(0,\omega)}{\varphi_{\DDD,\NNN}^j(0,\omega)}.
\end{equation}
Finally, inserting the value of $R_{\DDD,\NNN}(\omega)$ found in~\eref{eq:R} into~\eref{eq:R1}, we find
\begin{equation*}
\alpha_{\DDD,\NNN}^j(\omega)=\frac{1+R_{\DDD,\NNN}(\omega)}{\varphi_{\DDD,\NNN}^j(0,\omega)}.
\end{equation*}
\end{proof}
\section{Inverse scattering for geometry identification}\label{sec:geometry}
As a first inverse problem, we consider the inversion of the geometry of the network. In fact, we will prove the well-posedness of the inverse problem of finding the number of branches $N$ and the  lengthes $(\tau_j)_{j=1}^N$ of a star-shape graph through only one reflection coefficient $R_\NNN(\omega)$ (the case of Dirichlet reflection coefficient can be treated similarly). Furthermore, as we will see through the proof of the Theorem~\ref{thm:geometry}, the method is rather constructive and one can think of an algorithm to identify the lengthes, at least approximately. The proof is based on an asymptotic analysis in high-frequency regime of the reflection coefficient and some classical results from the theory of almost periodic functions (in Bohr sense). Before, announcing the main Theorem, we need a few lemmas. A first lemma precises the high frequency behavior of the Neumann fundamental solutions $(\varphi^j_\NNN)_{j=1}^N$:
\begin{lemma}\label{lem:fund}
Consider a potential $q$ in $C^0((-\infty,\tau])$ and take the Neumann fundamental solution, $\varphi_\NNN(x,\omega)$,  defined as in Definition~\ref{def:fund}.
We have
\begin{eqnarray*}
\varphi_\NNN(0,\omega)&=\cos(\omega\tau)+\OOO{\frac{1}{\omega}},\qquad \textrm{as }\omega\rightarrow\infty,\\
(\varphi_\NNN)'(0,\omega)&=\omega\sin(\omega\tau)+\OOO{1},\qquad \textrm{as }\omega\rightarrow\infty,
\end{eqnarray*}
where $(\varphi_\NNN)'(0,\omega)$ denotes the spatial derivative $\frac{d}{dx}\varphi_{\NNN}(x,\omega)$ at $x=0$.
\end{lemma}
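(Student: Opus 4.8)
The plan is to turn the terminal-value problem defining $\varphi_\NNN$ into a Volterra integral equation and then read off the leading terms. Writing the equation as $\varphi'' + \omega^2\varphi = q\varphi$ and applying variation of parameters with the data $\varphi_\NNN(\tau,\omega)=1$, $(\varphi_\NNN)'(\tau,\omega)=0$ prescribed at the \emph{right} endpoint, one obtains, for $x\in[0,\tau]$,
$$
\varphi_\NNN(x,\omega)=\cos\bigl(\omega(\tau-x)\bigr)-\frac{1}{\omega}\int_x^\tau\sin\bigl(\omega(x-s)\bigr)\,q(s)\,\varphi_\NNN(s,\omega)\,ds,
$$
and, differentiating in $x$,
$$
(\varphi_\NNN)'(x,\omega)=\omega\sin\bigl(\omega(\tau-x)\bigr)-\int_x^\tau\cos\bigl(\omega(x-s)\bigr)\,q(s)\,\varphi_\NNN(s,\omega)\,ds.
$$
One checks directly that the right-hand sides solve the Sturm--Liouville equation and satisfy the two conditions at $x=\tau$, so by uniqueness they represent $\varphi_\NNN$ and its derivative.

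The second step is an a priori bound, uniform in $\omega$. Since $q\in C^0([0,\tau])$ it is bounded there, hence $M:=\int_0^\tau|q(s)|\,ds<\infty$. From the first identity, $|\varphi_\NNN(x,\omega)|\le 1+\omega^{-1}\int_x^\tau|q(s)|\,|\varphi_\NNN(s,\omega)|\,ds$ for every $x\in[0,\tau]$, so Gronwall's inequality (applied on $[x,\tau]$, i.e.\ integrating from the right endpoint) gives $|\varphi_\NNN(x,\omega)|\le e^{M/\omega}\le e^{M}$ for all $x\in[0,\tau]$ and all $\omega\ge 1$. A successive-approximations argument would do as well.

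Finally, evaluating the two identities at $x=0$, the free terms are precisely $\cos(\omega\tau)$ and $\omega\sin(\omega\tau)$, and the integral remainders are controlled by the uniform bound: in the first identity the remainder is at most $\omega^{-1}e^{M}M=\OOO{1/\omega}$, and in the second it is at most $e^{M}M=\OOO{1}$, both as $\omega\to\infty$. This yields the two claimed asymptotics. There is no real obstacle; the only points deserving care are the bookkeeping of signs and the direction of integration (the data sit at the right endpoint $x=\tau$, so the Duhamel kernel is integrated backwards from $\tau$ to $x$), together with the observation that the error estimates are uniform in $x\in[0,\tau]$, which is what legitimizes a single application of Gronwall. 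The hypothesis $q\in C^0$ is in fact more than needed — $q\in L^1(0,\tau)$ suffices — but it makes the variation-of-parameters step transparent.
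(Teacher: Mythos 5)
Your proof is correct, but it follows a different route from the paper's. You derive a Volterra integral equation for $\varphi_\NNN$ by Duhamel/variation of parameters with the data prescribed at $x=\tau$, obtain a bound $|\varphi_\NNN(x,\omega)|\le e^{M}$ uniform in $x$ and in $\omega\ge 1$ via a backwards Gronwall argument, and then read off the asymptotics at $x=0$ directly from the free terms, the remainders being $\OOO{1/\omega}$ and $\OOO{1}$ respectively. The paper instead invokes the Marchenko transformation-operator representation $\varphi_\NNN(x,\omega)=\cos(\omega(\tau-x))+\int_x^\tau G_\NNN(\tau-x,\tau-s)\cos(\omega(\tau-s))\,ds$ with a kernel $G_\NNN$ that is $C^1$ (having the regularity of $\int_x^\tau q$), and gets the $\OOO{1/\omega}$ and $\OOO{1}$ terms by integration by parts in $s$. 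The trade-off: your argument is self-contained and more elementary, needing no transformation-operator theory, and as you note it works under the weaker hypothesis $q\in L^1(0,\tau)$; the paper's choice is natural in context because the same kernels $G^j_\NNN$ (and the product kernels $K^j_\NNN$ built from them) are reused in the potential-identification proofs later in the paper, and the $C^1$ kernel plus integration by parts readily yields more refined expansions of the remainder if needed. One small wording point: the right-hand side of your integral identity contains $\varphi_\NNN$ itself, so rather than saying it "solves the equation and satisfies the terminal conditions", the cleaner statement is that the terminal-value problem and the Volterra equation are equivalent (any solution of one solves the other), which is standard and does not affect the validity of your argument.
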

\begin{proof}
We start by writing $\varphi_\NNN(x,\omega)$ in its integral representation. Indeed, the fundamental solution $\varphi_{\NNN}(x,\omega)$ is given by~\cite{marchenko-book}:
\begin{equation*}
\varphi_{\NNN}(x,\omega)=\cos(\omega(\tau-x))+\int_x^{\tau} G_{\NNN}(\tau-x,\tau-s)\cos(\omega(\tau-s))ds,
\end{equation*}
where $G_{\NNN}(x,y)$ is a real function with the same regularity as $\int_x^\tau q(s) ds$: here, it is $C^1$ with respect to both coordinates.
We note that, as $G_\NNN$ is in $C^1$, by integrating by parts, one has:
\begin{eqnarray*}
\fl \int_0^\tau G_\NNN(\tau,\tau-s)\cos(\omega(\tau-s))ds=-\frac{1}{\omega}G_\NNN(\tau,\tau-s)\sin(\omega(\tau-s))\Big|_{s=0}^{s=\tau}\\ \qquad\qquad\qquad\qquad+\frac{1}{\omega}\int_0^\tau \frac{d}{ds}(G_\NNN(\tau,\tau-s))\sin(\omega(\tau-s))ds=\OOO{\frac{1}{\omega}}.
\end{eqnarray*}
Therefore $\varphi_\NNN(0,\omega)=\cos(\omega\tau)+\OOO{1/\omega}$ as $\omega\rightarrow\infty$ and we have the first relation.

For the spatial derivative $(\varphi_\NNN)'(x,\omega)$ at the point $x=0$, we have:
$$
(\varphi_\NNN)'(0,\omega)=\omega\sin(\omega(\tau))-G(\tau,\tau)\cos(\omega \tau)+\int_0^\tau \frac{d}{dx}(G_\NNN(\tau-x,\tau-s))\Big|_{x=0}\cos(\omega(\tau-s))ds.
$$
The kernel $G(x,y)$ being $C^1$, we have
$$
-G(\tau,\tau)\cos(\omega \tau)+\int_0^\tau \frac{d}{dx}(G_\NNN(\tau-x,\tau-s))\Big|_{x=0}\cos(\omega(\tau-s))ds=\OOO{1}\qquad \textrm{as }\omega\rightarrow\infty.
$$
Thus, $(\varphi_\NNN)'(0,\omega)= \omega\sin(\omega(\tau-x))+\OOO{1}$ as $\omega\rightarrow\infty$ and the second relation follows.
\end{proof}
As we see by Lemma~\ref{lem:fund}, in the high-frequency regime, the fundamental solutions become asymptotically independent of the potential $q$. The next lemma, provides an explicit method to identify the number $N$ and the lengthes $(\tau_j)_{j=1}^N$ of the branches for the homogenous case, where we know that the potentials $(q_j)_{j=1}^N$ are all zero.
\begin{lemma}\label{lem:homogenous}
Consider a star-shape network $\Gamma$ composed of $n_j$ branches of length $\tau_j$ ($j=1,\cdots,m$) all joining at a central node so that the whole number of branches $N$ is given by $\sum_{j=1}^m n_j$. Assume the potential $q$ on the network to be 0 ($q\equiv 0$). Then the knowledge of the Neumann reflection coefficient $R_\NNN(\omega)$ determines uniquely the parameters $(n_j)_{j=1}^m$ and $(\tau_j)_{j=1}^m$.
\end{lemma}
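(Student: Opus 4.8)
The plan is to exploit the fact that, when $q\equiv 0$, the Neumann fundamental solutions of Definition~\ref{def:fund} and the closed formula~\eref{eq:R} become completely explicit, and then to read off the geometry from the poles and residues of a meromorphic function built from $R_\NNN$.

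First I would solve the Sturm--Liouville problem of Definition~\ref{def:fund} by hand: for $q\equiv 0$ one has $\varphi^j_\NNN(x,\omega)=\cos(\omega(\tau_j-x))$, hence $\varphi^j_\NNN(0,\omega)=\cos(\omega\tau_j)$ and $(\varphi^j_\NNN)'(0,\omega)=\omega\sin(\omega\tau_j)$, so that $(\varphi^j_\NNN)'(0,\omega)/\varphi^j_\NNN(0,\omega)=\omega\tan(\omega\tau_j)$. Substituting into~\eref{eq:R} and grouping the branches of equal length gives the identity
\[
\frac{1-R_\NNN(\omega)}{1+R_\NNN(\omega)}=\frac{H}{\imath\omega}+\imath\sum_{j=1}^m n_j\tan(\omega\tau_j),
\]
valid for all real $\omega$ outside a discrete set. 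The right-hand side is meromorphic on ${\mathbb C}$, so the knowledge of $R_\NNN$ on $\RR$ (equivalently on $[0,\infty)$) determines the function $M(\omega):=(1-R_\NNN(\omega))/(1+R_\NNN(\omega))$ as a global meromorphic function. Its only possible pole at $\omega=0$ is simple with residue $-\imath H$, which recovers the constant $H$; subtracting $H/(\imath\omega)$ and multiplying by $-\imath$ then leaves the meromorphic function
\[
F(\omega):=\sum_{j=1}^m n_j\tan(\omega\tau_j)
\]
entirely determined by $R_\NNN$.

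It remains to recover $m$, $(\tau_j)_{j=1}^m$ and $(n_j)_{j=1}^m$ from $F$, which I would do by an inductive ``peeling'' argument ordered by decreasing length. The poles of $\tan(\omega\tau_j)$ form the arithmetic progression $P_j=\{(2k+1)\pi/(2\tau_j):k\in\ZZ\}$, and at every pole $p\in P_j$ the residue of $n_j\tan(\omega\tau_j)$ is the same number $-n_j/\tau_j$. Hence $\mathrm{Res}_{\omega=p}F=-\sum_{j:\,p\in P_j}n_j/\tau_j<0$ at each pole, so no cancellation can suppress a pole of $F$. The smallest positive pole of $F$ is therefore $p_{\min}=\pi/(2\tau_{\max})$ with $\tau_{\max}=\max_j\tau_j$: indeed every positive element of $P_j$ is $\ge\pi/(2\tau_j)\ge p_{\min}$, with equality only when $\tau_j=\tau_{\max}$, and since the $\tau_j$ are distinct there is exactly one such index. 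Thus $\tau_{\max}=\pi/(2p_{\min})$ is read off from the pole location and the corresponding multiplicity is $-\tau_{\max}\,\mathrm{Res}_{\omega=p_{\min}}F$. Subtracting that branch's contribution from $F$ yields the analogous function for the strictly smaller family $\{\tau_j<\tau_{\max}\}$; iterating exactly $m$ times (the process terminates precisely when the remainder vanishes) reconstructs all the data, which proves uniqueness.

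The only genuinely delicate point is the bookkeeping of poles shared by branches of different lengths — for instance when $\tau_j/\tau_{j'}$ is an odd integer one has $P_{j'}\subset P_j$, so the pole set alone does not determine the lengths; this is exactly why residues must be tracked and the largest length peeled off first, the stage at which I verified above that the leading pole is neither shared nor cancelled. Everything else — the explicit fundamental solution, the substitution into~\eref{eq:R}, and the analytic-continuation step — is routine.
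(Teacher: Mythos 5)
Your proposal is correct and follows essentially the same route as the paper: write the $q\equiv 0$ fundamental solutions explicitly, substitute into~\eref{eq:R} to reduce the data to the function $\sum_{j} n_j\tan(\omega\tau_j)$, and then recover the lengths and multiplicities inductively by locating the smallest positive pole (belonging only to the largest length) and peeling off that branch's contribution. Your extra care about residue signs and shared poles only makes explicit what the paper's induction uses implicitly, so there is nothing to correct.
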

\begin{proof}
We need to apply the explicit computation of the reflection coefficient provided by~\eref{eq:R}. The fundamental solutions are given, simply, by $\varphi_\NNN^j(x,\omega)=\cos(\omega(\tau_j-x))$. Therefore:
$$
\frac{1-R_{\NNN}(\omega)}{1+R_{\NNN}(\omega)}=\frac{1}{\imath\omega}H-\frac{1}{\imath\omega}\sum_{j=1}^m  n_j \frac{\omega\sin(\tau_j\omega)}{\cos(\tau_j\omega)}.
$$
The knowledge of $R_\NNN(\omega)$ determines uniquely the signal:
$$
f(\omega):=\sum_{j=1}^m n_j\tan(\omega\tau_j).
$$
Assuming, without loss of generality, that the lengthes $\tau_j$ are ordered increasingly $\tau_1<\cdots<\tau_m$, the first pole of the function $f(\omega)$ coincides with $\pi/2\tau_m$ and therefore determines $\tau_m$. Furthermore,
$$
n_m=\lim_{\omega\rightarrow \pi/2\tau_m} \cos(\omega\tau_m)f(\omega),
$$
and therefore one can also determine $n_m$. Now, considering the new signal $g(\omega)=f(\omega)-n_m\tan(\omega\tau_m)$, one removes the branches of length $\tau_m$ and exactly in the same manner, one can determine $\tau_{m-1}$ and $n_{m-1}$. The proof of the lemma  follows then by a simple induction.
\end{proof}
We are now ready to announce the main theorem of this section:
\begin{theorem}\label{thm:geometry}
Consider a star-shape network $\Gamma$ composed of $n_j$ branches of length $\tau_j$ ($j=1,\cdots,m$) all joining at a central node so that the whole number of branches $N$ is given by $\sum_{j=1}^m n_j$. Assume the potential $q$ on the network to be, simply, $C^0$. Then the knowledge of the Neumann reflection coefficient $R_\NNN(\omega)$ determines uniquely the parameters $(n_j)_{j=1}^m$ and $(\tau_j)_{j=1}^m$.
\end{theorem}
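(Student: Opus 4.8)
The argument starts from the explicit formula \eref{eq:R} for the reflection coefficient. Grouping the $N$ branches by their distinct lengths $\tau_1<\dots<\tau_m$ with multiplicities $n_k$, and abbreviating $m_l(\omega):=(\varphi^l_\NNN)'(0,\omega)/\varphi^l_\NNN(0,\omega)$, it reads
$$
\imath\omega\,\frac{1-R_\NNN(\omega)}{1+R_\NNN(\omega)}=H-\sum_{k=1}^m\ \sum_{\tau_l=\tau_k} m_l(\omega).
$$
By Lemma~\ref{lem:fund}, on any ``good'' region $G_\delta:=\{\,|\cos(\omega\tau_j)|\ge\delta\ \textrm{for all }j\,\}$ and for $\omega$ large (so that no $m_l$ has a pole in $G_\delta$), each term equals $\omega\tan(\omega\tau_k)+\OOO{1}$ with implied constant depending only on $\delta$ --- this is the regime in which the potentials disappear from the leading order. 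Hence, on $G_\delta$,
$$
R_\NNN(\omega)=\rho(\omega)+\OOO{1/\omega},\qquad \rho(\omega):=\frac{1-\imath T(\omega)}{1+\imath T(\omega)},\qquad T(\omega):=\sum_{k=1}^m n_k\tan(\omega\tau_k),
$$
with $|\rho|\equiv1$ and $T=\imath(\rho-1)/(\rho+1)$. The plan is then: (i) recover the meromorphic function $T$ from the large-$\omega$ behaviour of $R_\NNN$; (ii) read $(n_k,\tau_k)$ off $T$.

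Step (ii) is immediate and repeats the proof of Lemma~\ref{lem:homogenous} verbatim: the smallest positive pole of $T$ is $\pi/(2\tau_m)$, which fixes $\tau_m$; then $n_m=\lim_{\omega\to\pi/(2\tau_m)}\cos(\omega\tau_m)\,T(\omega)$; then $T-n_m\tan(\omega\tau_m)$ deletes the $n_m$ branches of maximal length, and a finite induction produces all the $\tau_k$ and $n_k$, hence $N=\sum_k n_k$.

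Step (i) is where classical almost-periodicity enters. The function $\rho$ has the form $\Phi(\omega\tau_1,\dots,\omega\tau_m)$ with $\Phi$ bounded on the torus $\mathbb T^m$ and continuous off a set of measure zero, so by equidistribution of the linear flow on $\mathbb T^m$ its mean values $c_\lambda(\rho):=\lim_{\Omega\to\infty}\frac1\Omega\int_0^\Omega\rho(\omega)e^{-\imath\lambda\omega}\,d\omega$ all exist; the exponents $\lambda$ that occur are the $\ZZ$-combinations of $\tau_1,\dots,\tau_m$, and the associated coefficients carry the multiplicities $n_k$. Crucially, the same limits are obtained with $R_\NNN$ in place of $\rho$: $\frac1\Omega\int_0^\Omega|R_\NNN-\rho|\to0$, because on $G_\delta$ the difference is $\OOO{1/\omega}$, off $G_\delta$ it is bounded by $2$, and the density of $[0,\Omega]\setminus G_\delta$ is $O(\delta)$ uniformly in $\Omega$. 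Hence $\rho$ --- and with it $T$ --- is uniquely determined by $R_\NNN$, and (ii) concludes the proof.

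I expect the main obstacle to be precisely the justification of this mean-value step, namely the fact that $R_\NNN$ is \emph{not} itself uniformly continuous (hence not Bohr almost periodic): when two lengths have irrational ratio, the Dirichlet--Neumann eigenvalue grids $\frac{\pi}{\tau_k}(\ZZ+\frac12)$ and $\frac{\pi}{\tau_{k'}}(\ZZ+\frac12)$ contain pairs of points arbitrarily close together at high frequency, and near such a near-coincidence $R_\NNN$ makes an order-one excursion over an $\omega$-window that shrinks with the gap. What saves the density estimate is that each pole of $\sum_l m_l$ disturbs $R_\NNN$ only on a window of length $\OOO{1}$ --- governed by its residue, which is $\approx\omega/\tau_l$ --- rather than $\OOO{1/\omega}$, so the set in $[0,\Omega]$ on which $R_\NNN$ differs appreciably from $\rho$ still has measure $o(\Omega)$. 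A secondary, harmless point is that the $C^0$ potentials merely shift each Dirichlet--Neumann eigenvalue by $o(1)$ off its grid point, which again affects only a density-zero set.
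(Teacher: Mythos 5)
Your reduction to Lemma~\ref{lem:homogenous} and your good-set/bad-set estimate are sound: on $G_\delta$, formula \eref{eq:R} together with Lemma~\ref{lem:fund} indeed gives $R_\NNN=\rho+\mathcal{O}_\delta(1/\omega)$, the complement of $G_\delta$ has density $\mathcal{O}(\delta)$, and letting $\Omega\to\infty$ before $\delta\to0$ yields $\frac1\Omega\int_0^\Omega|R_\NNN-\rho|\to0$. (Your closing remark that each pole disturbs $R_\NNN$ only on a window of length $\mathcal{O}(1)$ is both unnecessary and, as stated, too weak to give small density -- the poles are themselves spaced $\mathcal{O}(1)$ apart -- but your actual argument never uses it.) The genuine gap is the sentence ``hence $\rho$ --- and with it $T$ --- is uniquely determined by $R_\NNN$''. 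What your mean-value computation gives is that all Bohr--Fourier means of $\rho$ are determined by $R_\NNN$; to conclude that $\rho$ itself (in particular the poles and residues of $T$, which is what step (ii) consumes) is determined, you need a uniqueness theorem valid for the class containing $\rho$. But $\rho$ is exactly as badly behaved as $R_\NNN$: as soon as $m\ge2$ and some $\tau_k/\tau_{k'}$ is irrational, $\rho$ runs a full loop around the unit circle over arbitrarily short windows near near-coincident poles of the tangents, so it is not uniformly continuous, hence not Bohr almost periodic, and the classical uniqueness/Parseval theorem you implicitly invoke does not apply to it; likewise the claim that ``the coefficients carry the multiplicities $n_k$'' is asserted, not proved. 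The statement can be rescued, but it needs an argument: since the theorem is a uniqueness statement, compare two candidate geometries producing the same $R_\NNN$, deduce $\frac1\Omega\int_0^\Omega|\rho-\tilde\rho|\to0$, view $\rho-\tilde\rho$ as a bounded function on a torus, continuous off a Haar-null set, apply Weyl equidistribution of the linear flow (on the closed subgroup arising when lengths are rationally dependent) to conclude it vanishes off the discontinuity set, and then use continuity in $\omega$ to get $\rho\equiv\tilde\rho$, hence $T\equiv\tilde T$, and only then invoke Lemma~\ref{lem:homogenous}.

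It is worth noting how the paper sidesteps this entire difficulty: instead of averaging $R_\NNN$ itself, it equates the two expressions \eref{eq:R} for the two settings and multiplies through by $\prod_j\varphi^j_\NNN(0,\omega)\prod_k\tilde\varphi^k_\NNN(0,\omega)$, obtaining the pole-free identity \eref{eq:bla}; by Lemma~\ref{lem:fund} its trigonometric main part $F(\omega)$ is $\mathcal{O}(1/\omega)$, and since $F$ is a genuine trigonometric polynomial (hence Bohr almost periodic), $M(F^2)=0$ and Parseval force $F\equiv0$, which is precisely the tangent identity fed into Lemma~\ref{lem:homogenous}. Either adopt that denominator-clearing device or supply the equidistribution argument sketched above; as written, the recovery of $T$ from the mean values of $R_\NNN$ is the missing step.
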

\begin{proof}
Assume that, there exists two graph settings $(\tau_j, q_j)_{j=1}^N$ and $(\tilde \tau_j, \tilde q_j)_{j=1}^{\tilde N}$ (the lengthes $\tau_j$ are not necessarily different) giving rise to the same Neumann reflection coefficients: $R_\NNN(\omega)\equiv \tilde R_\NNN(\omega)$. By the explicit formula~\eref{eq:R}, we have
$$
\frac{1}{\omega}\sum_{j=1}^N \frac{(\varphi_{\NNN}^j)'(0,\omega)}{\varphi_{\NNN}^j(0,\omega)}\equiv \frac{1}{\omega}\sum_{k=1}^{\tilde N} \frac{(\tilde\varphi_{\NNN}^k)'(0,\omega)}{\tilde\varphi_{\NNN}^k(0,\omega)}.
$$
This is equivalent to: \small
\begin{eqnarray}\label{eq:bla}
\fl \prod_{j=1}^{\tilde N}\tilde\varphi_\NNN^j(0,\omega)
\left(\sum_{k=1}^N(\varphi_\NNN^k)'(0,\omega)\prod_{l\neq k} \varphi_\NNN^l(0,\omega)\right)-\nonumber\\
\qquad\qquad\qquad\prod_{j=1}^{N}\varphi_\NNN^j(0,\omega)
\left(\sum_{k=1}^{\tilde N}(\tilde\varphi_\NNN^k)'(0,\omega)\prod_{l\neq k} \tilde\varphi_\NNN^l(0,\omega)\right)=0.
\end{eqnarray}\normalsize
Defining the function: \small
\begin{eqnarray*}
\fl F(\omega):= \prod_{j=1}^{\tilde N}\cos(\omega\tilde\tau_k)
\left(\sum_{k=1}^N\sin(\omega\tau_k)\prod_{l\neq k} \cos(\omega\tau_l)\right)-
\prod_{j=1}^{N}\cos(\omega\tau_j)
\left(\sum_{k=1}^{\tilde N}\sin(\omega\tilde\tau_k)\prod_{l\neq k} \cos(\omega\tilde\tau_l)\right),
\end{eqnarray*}\normalsize
the asymptotic formulas of Lemma~\ref{lem:fund} and the~\eref{eq:bla} imply
$$
F(\omega)=\OOO{1/\omega}\qquad \textrm{as }\omega \rightarrow\infty.
$$
However, the function $F(\omega)$ is a trigonometric polynomial and almost periodic in the Bohr's sense~\cite{besicovitch-54}. The function $F^2(\omega)$ is, also, almost periodic and furthermore, we have
\begin{eqnarray*}
\fl M(F^2):=\lim_{\Omega\rightarrow\infty}\frac{1}{\Omega}\int_0^\Omega F^2(\omega)d\omega
=\lim_{\Omega\rightarrow\infty}\frac{1}{\Omega}\left(\int_0^1 F^2(\omega)d\omega+\int_1^\Omega F^2(\omega)d\omega\right)\\
\qquad\qquad\qquad\qquad\qquad\qquad\qquad\leq \lim_{\Omega\rightarrow\infty}\frac{1}{\Omega}\left(C_1+C_2\int_1^\Omega \frac{1}{\omega^2}d\omega\right)=0.
\end{eqnarray*}
This, trivially, implies that $F=0$ (one only needs to apply the Parseval's Theorem to the generalized fourier series of the function $F$). However, the relation $F(\omega)\equiv 0$ is equivalent to
$$
\sum_{j=1}^N \tan(\omega\tau_j)=\sum_{j=1}^{\tilde N} \tan(\omega\tilde\tau_j),
$$
and therefore, by Lemma~\ref{lem:homogenous}, the two settings are equivalent and the theorem follows.
\end{proof}

\section{Inverse Scattering for potential identification}\label{sec:potential}
A second inverse problem, related to the detection of soft faults in the network, can be formulated as the identification of the potentials on the branches. Here, we consider the case of homogenous perfect wires. We will show that the measurement of the two reflection coefficients $R_{\DDD}(\omega)$ and $R_{\NNN}(\omega)$, corresponding, respectively, to a short circuit and an open circuit experiment, is enough to identify uniquely the small changes in the potential remaining in a certain regularity class. Indeed, we will prove that the inverse problem of finding the $H^1(\Gamma)$ potentials in an $\epsilon$ $L^\infty(\Gamma)$-neighborhood of the zero potential (homogenous case), is well-posed for $\epsilon$ small enough. In this aim, we will need an additive assumption on the electrical lengths $(\tau_j)_{j=1}^N$, in order to remove symmetries leading to degeneracy problems. However, as it will be discussed later, it seems that this assumption can be relaxed.

In a first result (Theorem~\ref{thm:equal}), we prove that under some natural assumption on the electrical lengths, the knowledge of only one reflection coefficient (here $R_{\NNN}(\omega)$) is sufficient to identify uniquely the values:
$$
\bar q_j=\int_0^{\aaa_j} q_j(s)ds\qquad \forall j=1,\cdots,N.
$$
If anyone of these quantities appear to be different from zero, we know that there has been a change of parameters in the corresponding branch. By performing classical inverse scattering techniques over this branch~\cite{Kay-72,Jaulent-82} we can identify its soft faults. The question is therefore to identify the soft faults in the network which does not change the quantities $(\bar q_j)_{j=1}^N$. This is treated in the two Theorems~\ref{thm:Borg} and~\ref{thm:half}. For both these theorems, we will need some more restrictive assumptions on the electrical lengths $(\aaa_j)_{j=1}^N$. In Theorem~\ref{thm:half}, we will show that the knowledge of only one reflection coefficient (here $R_{\NNN}(\omega)$) is enough to identify the potential when we know that it has not changed on the first half of the branches. The Theorem~\ref{thm:Borg} provides a well-posedness result for the inversion of the potential over the whole graph but necessitates the knowledge of both reflection coefficients $R_{\DDD}(\omega)$ and $R_{\NNN}(\omega)$.

In the sequel, we note that the potential over the infinite branch $e_0$ is always given and is 0. Indeed, this homogenous line is added only for the reflectometry experiment. The following theorem provides a global inversion result concerning the quantities $\bar q_j$.
\begin{theorem}\label{thm:equal}
Consider a star-shaped graph $\Gamma$ and assume that
\begin{description}
  \item[$\textbf{B1}$] The electrical lengths $\{\tau_j\}_{j=1}^N$ are not entire multiples of each other,
  $$\frac{\tau_j}{\tau_i}\notin\NN\qquad\textrm{ for } i\neq j.$$
\end{description}
If there exist two potentials $q=\otimes_{j=1}^N q_j$ and $\tilde q=\otimes_{j=1}^N \tilde q_j$ in $H^1(\Gamma)$ giving rise to the same reflection coefficient, $R_{\NNN}(\omega)\equiv\tilde R_{\NNN}(\omega)$, one necessarily has:
$$
\int_0^{\tau_j} q_j(s) ds=\int_0^{\tau_j} \tilde q_j(s) ds\qquad j=1,\cdots,N.
$$
\end{theorem}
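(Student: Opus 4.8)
\medskip

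The plan is to carry the high-frequency expansion of Lemma~\ref{lem:fund} one order further and insert it into the explicit formula~\eref{eq:R}. The first step is a sharpening of Lemma~\ref{lem:fund}: for $q\in H^1((-\infty,\tau])$ one has, uniformly on $\{\omega\ge 1\}$,
\[
\varphi_\NNN(0,\omega)=\cos(\omega\tau)+\frac{\bar q}{2\omega}\sin(\omega\tau)+\OOO{\frac{1}{\omega^2}},\qquad (\varphi_\NNN)'(0,\omega)=\omega\sin(\omega\tau)-\frac{\bar q}{2}\cos(\omega\tau)+\OOO{\frac{1}{\omega}},
\]
with $\bar q:=\int_0^\tau q(s)\,ds$. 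This follows exactly as in Lemma~\ref{lem:fund}, keeping one more term: from the Volterra representation $\varphi_\NNN(x,\omega)=\cos(\omega(\tau-x))+\frac1\omega\int_x^\tau\sin(\omega(r-x))\,q(r)\,\varphi_\NNN(r,\omega)\,dr$, replacing $\varphi_\NNN(r,\omega)$ under the integral by its leading term $\cos(\omega(\tau-r))$ and using $\sin A\cos B=\frac12(\sin(A+B)+\sin(A-B))$ produces the displayed $\bar q$-coefficient, the leftover oscillatory integral being $\OOO{1/\omega}$ after one integration by parts (this is where the $H^1$ regularity enters); differentiating in $x$ gives the second expansion.

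Next I would write~\eref{eq:R} for $q$ and for $\tilde q$ on the same graph $\Gamma$ — all the $\tau_j$ coincide, and by \textbf{B1} they are pairwise distinct. The hypothesis $R_\NNN\equiv\tilde R_\NNN$ then reads
\[
H-\sum_{j=1}^N\frac{(\varphi^j_\NNN)'(0,\omega)}{\varphi^j_\NNN(0,\omega)}=\tilde H-\sum_{j=1}^N\frac{(\tilde\varphi^j_\NNN)'(0,\omega)}{\tilde\varphi^j_\NNN(0,\omega)},
\]
where $H,\tilde H$ are the central-node constants of~\eref{eq:H} for the two configurations; I would keep them possibly different, the argument below forcing $H=\tilde H$. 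Multiplying through by $\prod_{l=1}^N\varphi^l_\NNN(0,\omega)\,\tilde\varphi^l_\NNN(0,\omega)$ turns this into an identity between entire functions, with no denominators. Inserting the refined asymptotics and sorting by powers of $\omega$: the $\OOO{\omega}$ terms cancel identically (this uses that the branch lengths match), and a bookkeeping computation on the $\OOO{1}$ terms — in which the collapse $1+\tan^2=1/\cos^2$ occurs — gives
\[
G(\omega):=(H-\tilde H)\prod_{l=1}^N\cos^2(\omega\tau_l)+\frac12\sum_{k=1}^N\big(\bar q_k-\tilde{\bar q}_k\big)\prod_{l\neq k}\cos^2(\omega\tau_l)=\OOO{\frac{1}{\omega}}\qquad\text{as }\omega\to\infty,
\]
uniformly, where $\bar q_k:=\int_0^{\tau_k}q_k$ and $\tilde{\bar q}_k:=\int_0^{\tau_k}\tilde q_k$.

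Now $G$ is a trigonometric polynomial, hence almost periodic in the Bohr sense, and so is $G^2$; the mean-value estimate already used in the proof of Theorem~\ref{thm:geometry} (split $\frac1\Omega\int_0^\Omega=\frac1\Omega\int_0^1+\frac1\Omega\int_1^\Omega$ and bound $G^2\le C/\omega^2$ on $[1,\Omega]$) yields $M(G^2)=0$, so $G\equiv 0$ by Parseval for the generalized Fourier series. It remains to extract the $\bar q_k$: evaluating $G\equiv0$ at $\omega=\pi/(2\tau_k)$ the factor $\cos(\omega\tau_k)$ vanishes, killing the $(H-\tilde H)$-term and every term of the sum except the $k$-th, so that $\big(\bar q_k-\tilde{\bar q}_k\big)\prod_{l\neq k}\cos^2\big(\pi\tau_l/(2\tau_k)\big)=0$. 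By \textbf{B1}, for $l\neq k$ the ratio $\tau_l/\tau_k$ is not a positive integer, in particular not an odd one, so $\cos\big(\pi\tau_l/(2\tau_k)\big)\neq0$ and the product is non-zero; hence $\bar q_k=\tilde{\bar q}_k$. This holds for each $k=1,\dots,N$, which is exactly the assertion (and then $G\equiv0$ additionally forces $H=\tilde H$).

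The delicate point is the second step: the $\bar q$-coefficient in the refined expansion has to be exactly correct, and it must be transported — with a \emph{uniform} $\OOO{1/\omega}$ remainder, hence without ever dividing by a fundamental solution that may vanish (at a zero of some $\varphi^j_\NNN(0,\cdot)$) — through the nonlinear product structure of~\eref{eq:R}. The $\OOO{\omega}$ cancellation is automatic, but the $\OOO{1}$ contributions recombine nontrivially and must be organized carefully before the clean formula for $G$ appears. Everything downstream is soft and short; in particular, unlike in Section~\ref{sec:geometry}, Theorem~\ref{thm:geometry} is not needed here, since the metric graph is fixed from the outset.
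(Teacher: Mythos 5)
Your argument is correct, but it reaches the key intermediate identity by a genuinely different route than the paper. The paper first converts $R_\NNN\equiv\tilde R_\NNN$ into an \emph{exact} characteristic identity: via Propositions~\ref{prop:sturm} and~\ref{prop:char} and Corollary~\ref{cor:char} it obtains $\Phi_\NNN\tilde\Psi_\NNN=\tilde\Phi_\NNN\Psi_\NNN$, and then, integrating by parts (Proposition~\ref{prop:char2}), the relation~\eref{eq:charfinal} in which $\hat q_j=\tilde q_j-q_j$ appears integrated against the products $\varphi^j_\NNN\tilde\varphi^j_\NNN$; only then are asymptotics applied, and to these \emph{products}, through the kernel representation~\eref{eq:Volterra}, yielding $\sum_j\bigl(\prod_{k\neq j}\cos^2(\omega\tau_k)\bigr)\frac12\int_0^{\tau_j}\hat q_j\,ds=\OOO{1/\omega}$. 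You instead bypass the whole inverse Sturm--Liouville reformulation and work directly on~\eref{eq:R}: you sharpen Lemma~\ref{lem:fund} by one order (your expansions are correct and uniform for $q\in H^1$, and the Wronskian-type cancellation $-(\varphi^j_\NNN)'(0)\tilde\varphi^j_\NNN(0)+(\tilde\varphi^j_\NNN)'(0)\varphi^j_\NNN(0)=\frac12(\bar q_j-\tilde{\bar q}_j)+\OOO{1/\omega}$ does produce exactly the paper's leading term), and you rightly clear denominators before expanding so that zeros of $\varphi^j_\NNN(0,\cdot)$ cause no trouble; as a small bonus you also recover $H=\tilde H$ rather than assuming it fixed. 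From that point on the two proofs coincide: Bohr almost periodicity of the trigonometric polynomial, $M(G^2)=0$, Parseval forcing $G\equiv0$, and evaluation at $\omega=\pi/(2\tau_k)$, where \textbf{B1} guarantees $\cos(\pi\tau_l/(2\tau_k))\neq0$ for $l\neq k$. What the paper's detour buys is the exact identity~\eref{eq:charfinal}, which is reused verbatim in the proofs of Theorems~\ref{thm:half} and~\ref{thm:Borg}; what your route buys is a shorter, self-contained proof of Theorem~\ref{thm:equal} that needs only the second-order expansion of the individual fundamental solutions and no transformation-operator representation of products.
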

\begin{remark}\label{rem:integral}
Note that the result of the Theorem is also valid if we have $R_{\DDD}(\omega)\equiv \tilde R_{\DDD}(\omega)$.
\end{remark}
This theorem allows us to identify the situations where the soft fault in the network causes a change of the quantities $\bar q_j$. As explained above, the inverse problem can then be considered on separate branches and solved through classical inverse scattering techniques. In the two following theorems, we assume that the soft faults in the network leave the quantities $\bar q_j$ unchanged and as for the perfect situation we are dealing with homogenous lines we will assume that
\begin{description}
  \item[$\textbf{B2}$] $\bar q_j=\int_0^{\tau_j} q_j(s)ds=0$ for $j=1,\cdots,N$.
\end{description}
\begin{theorem}\label{thm:half}
Consider a star-shaped graph $\Gamma$ and assume that
\begin{description}
  \item[$\textbf{B1}'$] For any $j,j'\in\{1,\cdots,N\}$ such that $j\neq j'$, $\tau_j/\tau_{j'}$ is an algebraic irrational number.
\end{description}
Then, there exists $\epsilon >0$ small enough such that, if the potentials $q$ and $\tilde q$ belonging to $H^1(\Gamma)$ and satisfying $\textbf{B2}$, $\| q\|_{L^\infty(\Gamma)},\|\tilde{q}\|_{L^\infty(\Gamma)}<\epsilon$ and $q_j(s)=\tilde q_j(s)$ for $s\in [0,\tau_j/2]$, $j=1,\cdots,N$, give rise to the same Neumann reflection coefficients, $R_{\NNN}(\omega)\equiv\tilde R_{\NNN}(\omega)$, then $q\equiv \tilde q$.
\end{theorem}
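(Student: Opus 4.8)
The plan is to use the explicit representation~\eref{eq:R} of $R_{\NNN}$ to turn the hypothesis into an identity between the Weyl‑type functions of the individual branches, to \emph{decouple} these branches with the help of the Diophantine hypothesis $\textbf{B1}'$, and finally to apply the classical half‑inverse (Hochstadt--Lieberman) theorem on each branch. For $j=1,\dots,N$ set $D_j(\omega)=\varphi^j_{\NNN}(0,\omega)$ and $N_j(\omega)=(\varphi^j_{\NNN})'(0,\omega)$; these are entire functions of $\omega$ (even, of exponential type $\tau_j$), and the zeros of $D_j$ are the square roots of the eigenvalues of $-d^2/dx^2+q_j$ on $[0,\tau_j]$ with a Dirichlet condition at $0$ and the Neumann condition at $\tau_j$. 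Write $\sigma_j$ for this spectrum and $\tilde D_j,\tilde N_j,\tilde\sigma_j$ for the objects attached to $\tilde q_j$. From $R_{\NNN}\equiv\tilde R_{\NNN}$ and~\eref{eq:R} one gets
\[
\sum_{j=1}^N\Bigl(\frac{N_j}{D_j}-\frac{\tilde N_j}{\tilde D_j}\Bigr)\equiv c,\qquad c:=H-\tilde H\in\RR,
\]
and, after multiplying through by $\prod_{j}D_j\prod_{j}\tilde D_j$, an identity $(\star)$ between entire functions of order one.

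The decisive step is to show that, for $\epsilon$ small, the Dirichlet--Neumann spectra of distinct branches remain disjoint: $\sigma_i\cap\sigma_j=\sigma_i\cap\tilde\sigma_j=\tilde\sigma_i\cap\tilde\sigma_j=\emptyset$ whenever $i\neq j$. When all the potentials vanish, $\sqrt{\sigma^{(0)}_j}=\{(2n+1)\pi/(2\tau_j):n\ge 0\}$, and two such sets are disjoint for $i\neq j$ because $\tau_i/\tau_j\notin\QQ$. The algebraicity in $\textbf{B1}'$ upgrades this to a \emph{quantitative} separation, via Liouville's inequality on rational approximations of an algebraic irrational of degree $d$:
\[
\bigl|\sqrt{\sigma^{(0)}_{i,m}}-\sqrt{\sigma^{(0)}_{j,n}}\bigr|\ \gtrsim\ (m+n)^{-(d-1)}.
\]
On the other hand, because $\bar q_j=\bar{\tilde q}_j=0$ (Assumption~$\textbf{B2}$), the first‑order term in the eigenvalue asymptotics cancels, so that when $\|q_j\|_{L^\infty},\|\tilde q_j\|_{L^\infty}<\epsilon$ the $n$‑th eigenvalue is displaced from its homogeneous value by at most a small multiple of $\epsilon/n$ (with a further gain from $q_j\in H^1$). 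Comparing the two bounds, for $\epsilon$ small enough the perturbed spectra are still pairwise disjoint. Balancing this Diophantine separation against the perturbation of the eigenvalues — in particular treating the large‑index modes, where the separation decays polynomially — is the main obstacle, and is precisely where $\textbf{B1}'$, $\textbf{B2}$ and the smallness of $\epsilon$ enter; if needed the high modes are handled by an asymptotic/Rouch\'e argument rather than by exact disjointness.

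Granting the disjointness, evaluate $(\star)$ at a zero $\omega_0$ of $D_{j_0}$. The right‑hand side is divisible by $\prod_j D_j$, hence vanishes at $\omega_0$; the left‑hand side collapses to $N_{j_0}(\omega_0)\,\prod_{k\neq j_0}D_k(\omega_0)\,\prod_{l}\tilde D_l(\omega_0)$, and $N_{j_0}(\omega_0)\neq 0$ since $\varphi^{j_0}_{\NNN}(\cdot,\omega_0)$ cannot have vanishing Cauchy data at $x=0$. Therefore $\omega_0$ must be a zero of some other $D_k$ or of some $\tilde D_l$; by the disjointness this forces $\tilde D_{j_0}(\omega_0)=0$. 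Hence $\sigma_{j_0}\subseteq\tilde\sigma_{j_0}$, and by symmetry $\sigma_{j_0}=\tilde\sigma_{j_0}$ (the spectra being simple). Since $D_{j_0}$ and $\tilde D_{j_0}$ are both of sine type with the common asymptotics $\cos(\omega\tau_{j_0})+\OOO{1/\omega}$ (Lemma~\ref{lem:fund}) and share their zeros, the standard Hadamard‑factorization / Phragm\'en--Lindel\"of argument gives $D_{j_0}\equiv\tilde D_{j_0}$: the potentials $q_{j_0}$ and $\tilde q_{j_0}$ produce the same Dirichlet--Neumann spectrum on $[0,\tau_{j_0}]$.

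Finally, for each $j$ we now know two things: $q_j$ and $\tilde q_j$ give the same spectrum of the boundary value problem on $[0,\tau_j]$ (Dirichlet at $0$, Neumann at $\tau_j$), and $q_j=\tilde q_j$ on the first half $[0,\tau_j/2]$. The classical half‑inverse spectral theorem on an interval (Hochstadt--Lieberman, in the version admitting a Neumann condition at the far end; in the spirit of~\cite{Borg-46}) then yields $q_j\equiv\tilde q_j$ on all of $[0,\tau_j]$. Since this holds for every $j=1,\dots,N$, we conclude $q\equiv\tilde q$.
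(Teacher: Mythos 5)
Your overall strategy is essentially the paper's: multiply \eref{eq:R} out to an entire identity, evaluate at the zeros of $D_{j_0}=\varphi^{j_0}_{\NNN}(0,\cdot)$, use disjointness of the Dirichlet--Neumann spectra of distinct branches to decouple, and finish branch by branch with a half-inverse argument. The genuine gap is in the disjointness step. Liouville's inequality for an algebraic ratio $\tau_i/\tau_j$ of degree $d$ only gives a separation of the unperturbed eigenvalues of order $n^{-(d-1)}$, while under $\textbf{B2}$ and $q,\tilde q\in H^1$ the $n$-th eigenvalue is displaced by $\OOO{1/n^2}$ (and the constant in that $\OOO{1/n^2}$ is controlled by the $H^1$ norm, which is \emph{not} assumed small, so shrinking $\epsilon$ does not help at high index). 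Hence for $d\geq 4$ (and already borderline for $d=3$) your claimed disjointness of the high-index eigenvalues simply does not follow from Liouville. Your fallback --- ``handle the high modes by an asymptotic/Rouch\'e argument rather than by exact disjointness'' --- is not compatible with your own deduction: you evaluate the identity at the \emph{exact} zero $\omega_0$ of $D_{j_0}$ and need the \emph{exact} nonvanishing of $D_k(\omega_0)$ and $\tilde D_l(\omega_0)$ for $k,l\neq j_0$ in order to conclude $\tilde D_{j_0}(\omega_0)=0$; approximate locations of zeros give nothing here. The missing ingredient is a stronger Diophantine theorem: the paper's Lemma~\ref{lem:separation} invokes Thue--Siegel--Roth~\cite{roth-55} (finitely many solutions of $|\alpha-p/q|<q^{-3}$ for any algebraic irrational $\alpha$, whatever its degree) to exclude all but finitely many high-index coincidences, and then Kato perturbation theory~\cite{kato-book-80}, together with the irrationality of the ratios and the smallness of $\epsilon$, for the finitely many low modes. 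Replacing your Liouville bound by Roth's theorem repairs the step and makes your separation lemma coincide with the paper's.

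Granting that lemma, the remainder of your argument is correct and differs from the paper only in the last step: the paper stays with the orthogonality relations $\int_0^{\tau_j}\hat q_j\,\varphi^j_{\NNN}(\cdot,\lambda_i^j)\tilde\varphi^j_{\NNN}(\cdot,\lambda_i^j)\,dx=0$ and concludes via the Volterra-kernel representation and the Riesz basis property of $\{\cos\bigl(2\lambda_i^j(\tau_j-\cdot)\bigr)\}$ on $L^2[\tau_j/2,\tau_j]$ (Proposition 1.8.6 of~\cite{yurko-book}), i.e.\ it reproves the needed half-inverse statement directly, whereas you first upgrade to equality of the full Dirichlet--Neumann spectra on each branch and then cite the Hochstadt--Lieberman theorem. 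That final appeal is legitimate here (both endpoint conditions are known and the potentials agree on the half adjacent to $x=0$); it simply outsources what the paper proves by hand, at the cost of quoting a result outside the paper's toolkit.
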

\begin{remark}\label{rem:half}
Once again, the result remains valid if we replace the Neumann reflection coefficient by the Dirichlet one.
\end{remark}
\begin{remark}\label{rem:B1}
The assumption $\textbf{B1}'$ seems very restrictive. However, as it will be seen through the proof, the only thing we need is that for any of fractions $\tau_j/\tau_{j'}$, there exists at most a finite number of co-prime factors $(p,q)\in\NN\times \NN$,  such that the Diophantine approximation
$$
\left|\frac{\tau_j}{\tau_{j'}}-\frac{p}{q}\right|\leq \frac{1}{q^3},
$$
holds true. However, this is a classical result of the Borel-Cantelli Lemma that for almost all (with respect to Lebesgue measure) positive real $\alpha$'s this Diophantine approximation has finite number of solutions.
\end{remark}
\begin{theorem}\label{thm:Borg}
Consider a star-shaped graph $\Gamma$ satisfying $\textbf{B1}'$.
There exists $\epsilon >0$ small enough such that, if the potentials $q$ and $\tilde q$ belonging to $H^1(\Gamma)$ and satisfying $\textbf{B2}$ and $\|q\|_{L^\infty(\Gamma)},\|\tilde{q}\|_{L^\infty(\Gamma)}<\epsilon$, give rise to the same Neumann and Dirichlet reflection coefficients,
$$R_{\NNN}(\omega)\equiv\tilde R_{\NNN}(\omega) \qquad\textrm{and}\qquad R_{\DDD}(\omega)\equiv\tilde R_{\DDD}(\omega),$$
then $q\equiv \tilde q$.
\end{theorem}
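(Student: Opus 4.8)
\emph{Sketch.} The plan is to reduce the problem, branch by branch, to Borg's classical two--spectra theorem~\cite{Borg-46}, the reduction being made possible by disentangling the spectra attached to the $N$ branches --- which is precisely where $\textbf{B1}'$, $\textbf{B2}$ and the smallness of $q$ come in. Set $M^j_{\NNN}(\omega)=(\varphi^j_{\NNN})'(0,\omega)/\varphi^j_{\NNN}(0,\omega)$ and $M^j_{\DDD}(\omega)=(\varphi^j_{\DDD})'(0,\omega)/\varphi^j_{\DDD}(0,\omega)$; since the fundamental solutions of Definition~\ref{def:fund} are entire in $\omega$, these are meromorphic on the whole plane, with only simple real poles. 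By~\eref{eq:R} the function $\Phi_{\NNN}:=\sum_{j=1}^N M^j_{\NNN}$ satisfies $\Phi_{\NNN}(\omega)=H-\imath\omega\,(1-R_{\NNN}(\omega))/(1+R_{\NNN}(\omega))$, and likewise for $\Phi_{\DDD}$; as $|R_{\NNN,\DDD}|<1$ on the real axis, the measured $R_{\NNN},R_{\DDD}$ determine $\Phi_{\NNN},\Phi_{\DDD}$ up to the single additive constant $H$. Hence $R_{\NNN}\equiv\tilde R_{\NNN}$ and $R_{\DDD}\equiv\tilde R_{\DDD}$ force $\Phi_{\NNN}-\tilde\Phi_{\NNN}\equiv H-\tilde H\equiv\Phi_{\DDD}-\tilde\Phi_{\DDD}$, so in particular both differences are \emph{pole--free}. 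The real poles of $\Phi_{\NNN}$ are exactly the numbers $\omega_{j,k}>0$ whose squares are the eigenvalues of $-\frac{d^2}{dx^2}+q_j$ on $[0,\tau_j]$ with a Dirichlet condition at $0$ and a Neumann condition at $\tau_j$, while the real poles of $\Phi_{\DDD}$ are the numbers $\nu_{j,k}$ for Dirichlet conditions at both ends.

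\emph{Separation of the branch spectra.} From the refined eigenvalue asymptotics for $H^1$ potentials (see e.g.~\cite{marchenko-book}) together with the vanishing--mean normalisation $\textbf{B2}$ one gets
$$\omega_{j,k}=\frac{(2k+1)\pi}{2\tau_j}+\OOO{k^{-2}},\qquad \nu_{j,k}=\frac{k\pi}{\tau_j}+\OOO{k^{-2}},$$
with implied constants controlled by the size of $q$; for a merely $L^\infty$ potential of zero mean the deviation would only be $o(k^{-1})$, which is why $H^1$--regularity is used here. On the other hand $\textbf{B1}'$ --- through the Diophantine estimate recalled in Remark~\ref{rem:B1}, i.e.\ Roth's theorem for algebraic irrationals --- ensures that the unperturbed grids $\{(2k+1)\pi/(2\tau_j)\}_{j,k}$ and $\{k\pi/\tau_j\}_{j,k}$, which we know since $\Gamma$ is given, are quantitatively separated: any two distinct grid points near height $n$ are at mutual distance at least $c\,n^{-2}$ for some $c>0$, while within one branch the $\nu$-- and $\omega$--points alternate with spacing $\pi/(2\tau_j)$. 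Choosing $\epsilon$ small enough that, \emph{uniformly in $k$}, the $\OOO{k^{-2}}$ deviations are less than half the relevant gap (the finitely many small $k$ being absorbed by shrinking $\epsilon$ once more), we conclude: each real pole of $\Phi_{\NNN}$ (resp.\ $\Phi_{\DDD}$) lies in a \emph{known, isolated} cluster around a unique grid point, this cluster contains exactly one pole of $M^j_{\NNN}$ (resp.\ $M^j_{\DDD}$) and one of $\tilde M^j_{\NNN}$ (resp.\ $\tilde M^j_{\DDD}$), and poles of different branches never coincide.

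\emph{Branchwise Borg.} Fix $j$ and $k$. In the cluster around $(2k+1)\pi/(2\tau_j)$ the only pole of $\Phi_{\NNN}$ is $\omega_{j,k}$ (the other terms $M^{j'}_{\NNN}$ being holomorphic there) and the only pole of $\tilde\Phi_{\NNN}$ is $\tilde\omega_{j,k}$; since $\Phi_{\NNN}-\tilde\Phi_{\NNN}$ is pole--free and these are simple poles with nonzero residue, necessarily $\omega_{j,k}=\tilde\omega_{j,k}$. Thus $q_j$ and $\tilde q_j$ share the Dirichlet--Neumann spectrum on $[0,\tau_j]$, and the same argument with $\Phi_{\DDD}$ gives equality of the Dirichlet--Dirichlet spectra. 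For fixed $j$ these are two spectra of the Sturm--Liouville operator on $[0,\tau_j]$ with a common (Dirichlet) condition at $x=0$ and two distinct conditions (Neumann, Dirichlet) at $x=\tau_j$, hence by Borg's theorem~\cite{Borg-46} they determine $q_j$ uniquely; therefore $q_j\equiv\tilde q_j$ for every $j$, i.e.\ $q\equiv\tilde q$.

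\emph{Main obstacle.} The delicate point is the separation step: one must reconcile, \emph{through a single smallness threshold on $\epsilon$}, two competing quantities of size comparable to $k^{-2}$ --- the spectral shift caused by the potential (no better than $k^{-2}$, whence the need for both $q\in H^1$ and $\textbf{B2}$) and the Diophantine lower bound on the gaps between branch grids (whence $\textbf{B1}'$). Once the branch spectra are disentangled the inversion is just the classical, global Borg theorem, so the smallness of $q$ is genuinely needed only for this disentangling and not for the final uniqueness.
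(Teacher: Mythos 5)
Your endgame is genuinely different from the paper's. The paper never recovers the branch spectra of $q$ and $\tilde q$ separately: it works with the bilinear identity of Proposition~\ref{prop:char2} (equations \eref{eq:charfinal} and \eref{eq:charfinalD}), evaluates it at the Dirichlet--Neumann and Dirichlet--Dirichlet eigenvalues $\lambda_i^j$, $\mu_i^j$ attached to $q$, and concludes $\hat q\equiv 0$ because the products $\varphi^j\tilde\varphi^j$ (together with the constants) form a Riesz basis of $L^2(\Gamma)$ --- i.e.\ it re-runs Borg's completeness argument directly on the graph (Appendix, Lemma~\ref{lem:riesz2}). You instead read the sum of Weyl-type quotients $\sum_j(\varphi^j_{\NNN,\DDD})'(0,\omega)/\varphi^j_{\NNN,\DDD}(0,\omega)$ off the reflection coefficient via \eref{eq:R}, match poles branch by branch to deduce that $q_j$ and $\tilde q_j$ share both the D--N and the D--D spectrum on $[0,\tau_j]$, and then invoke the classical interval two-spectra theorem as a black box. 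This is a legitimate and more modular alternative (and it makes transparent that smallness is needed only for disentangling the branch spectra, not for the final uniqueness); what both proofs share is the separation ingredient, which in the paper is Lemma~\ref{lem:separation}.

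The step you should repair is the quantitative form of that separation. You assert that the deviations of the eigenvalues from the unperturbed grids are $\OOO{k^{-2}}$ \emph{with constants controlled by $\epsilon$}, and that shrinking $\epsilon$ makes them, uniformly in $k$, smaller than half the Diophantine gap. Neither claim is available: under $\textbf{B2}$ the $k^{-2}$ term in the asymptotics is governed by the $H^1$ data of $q$ (the $L^\infty$ bound $\epsilon$ does not control it), and the constant in Roth's theorem is ineffective, so ``half the relevant gap'' cannot be calibrated in any case. Fortunately your pole-cancellation argument does not need disjoint clusters containing exactly one pole of each family; it only needs that no D--N (resp.\ D--D) eigenvalue of branch $j$, for $q$ or $\tilde q$, coincides with one of a different branch $j'$, for $q$ or $\tilde q$: then an unmatched simple pole of $M^j_{\NNN}$ would survive in the pole-free difference, forcing the spectra of $q_j$ and $\tilde q_j$ to coincide. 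That weaker, purely qualitative statement is exactly the paper's Lemma~\ref{lem:separation}, proved softly: infinitely many high-index cross-branch coincidences would yield infinitely many approximations $\left|\tau_k/\tau_j-p/q\right|\le q^{-3}$, contradicting Roth under $\textbf{B1}'$, while the finitely many low-index ones are excluded by perturbation from $q=\tilde q=0$ for $\epsilon$ small --- this is where $\epsilon$ genuinely enters. With your separation paragraph replaced by that argument (and with the additional small-$\epsilon$ remark, as in the paper, that all D--N and D--D eigenvalues remain positive so that the poles in the $\omega$-variable are real and simple), your proof goes through.
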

\begin{remark}
The Theorem~\ref{thm:Borg} is a natural generalization to the case of a graph of the Theorem of two boundary
spectra on an interval~\cite{Borg-46} (see, for instance, Theorem 1.4.4, Page 24,~\cite{yurko-book}). In~\cite{Borg-46}, Borg proved that the knowledge of two spectral data corresponding to two boundary conditions, determine uniquely
the potential on an interval. Here the reflection coefficients $R_{\NNN}(\omega)$ and
$R_{\DDD}(\omega)$ play the role of this spectral data (see the Subsection~\ref{ssec:sturm-liouville}).
\end{remark}

\subsection{Inverse Sturm-Liouville problem}\label{ssec:sturm-liouville}
Throughout this subsection, we will consider a general star-shaped graph $\Gamma$ (of $N$ finite branches) and a potential $q=\otimes_{j=1}^N q_j$ belonging to $H^1(\Gamma)$. Furthermore we assume for the potential $q$ that, the norm $\|\q\|_{L^\infty(\Gamma)}$ is sufficiently small.

The main objective of this subsection is to show that the knowledge of the reflection coefficient $R_{\NNN}(\omega)$ for $\LLL^+_{\NNN}$ (resp. $R_{\DDD}(\omega)$ for $\LLL^+_{\DDD}$) is equivalent to the knowledge of different positive spectra of Sturm-Liouville operators defined on $\Gamma$ with Neumann (resp. Dirichlet) boundary conditions at terminal nodes and for various boundary conditions at the central node. In fact, defining the function
$$
h_{\NNN,\DDD}(\omega)=H+\frac{\imath \omega(R_{\NNN,\DDD}(\omega)-1)}{(1+R_{\NNN,\DDD}(\omega))},
$$
where $H$ is given by~\eref{eq:H}, we have the following result.

\begin{prop}\label{prop:sturm}
Fix $\omega\in\RR$ and define the Schr\"odinger operators $\LLL_{\NNN,\DDD}(\omega)$ on the compact graph $\Gamma$ as follows:
\begin{eqnarray*}
\LLL_{\NNN,\DDD}(\omega)&=\otimes_{j=1}^N (-\frac{d^2}{dx^2}+q_j(x)),\\
D(\LLL_{\NNN,\DDD}(\omega))&= \textrm{closure of } C_{\omega;\NNN,\DDD}^\infty(\Gamma) \textrm{ in }
H^2(\Gamma),
\end{eqnarray*}
where $C_{\omega;\NNN}^\infty(\Gamma)$ (resp. $C_{\omega;\DDD}^\infty(\Gamma)$) denotes the space of infinitely
differentiable functions $f=\otimes_{j=1}^N f_j$ defined on $\Gamma$ satisfying the boundary
conditions
\begin{eqnarray*}
f_j(0)&=f_{j'}(0)=:\bar f \qquad &j,j'=1,\cdots,N,\\
\sum_{j=1}^N f_j'(0)&= h_{\NNN,\DDD}(\omega) \bar f,&\\
f_j'(\tau_j)&=0\quad (f_j(\tau_j)=0 \textrm{ for } C_{\omega;\DDD}^\infty(\Gamma)),\qquad  ~~& j=1,\cdots,N.
\end{eqnarray*}
Then we are able to characterize the positive spectrum of $\LLL_{\NNN,\DDD}(\omega)$ as a level set of the function $h_{\NNN,\DDD}(\omega)$:
\begin{equation*}
\sigma^+(\LLL_{\NNN,\DDD}(\omega))=\left\{ \xi^2~|~\xi\in\RR,~ h_{\NNN,\DDD}(\xi)=h_{\NNN,\DDD}(\omega)\right\}.
\end{equation*}
\end{prop}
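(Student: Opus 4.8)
The plan is to connect the spectral condition for $\LLL_{\NNN,\DDD}(\omega)$ with the explicit formula~\eref{eq:R} for the reflection coefficient, via the function $h_{\NNN,\DDD}$. First I would observe that $\xi^2$ is an eigenvalue of $\LLL_{\NNN,\DDD}(\omega)$ precisely when there is a nontrivial solution $f=\otimes_{j=1}^N f_j$ of $-f_j''+q_jf_j=\xi^2 f_j$ on each branch, satisfying the terminal condition $f_j'(\tau_j)=0$ (resp.\ $f_j(\tau_j)=0$), the continuity condition $f_i(0)=f_j(0)=:\bar f$, and the matching condition $\sum_{j=1}^N f_j'(0)=h_{\NNN,\DDD}(\omega)\bar f$. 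On each branch the terminal condition forces $f_j$ to be a multiple of the fundamental solution $\varphi^j_{\NNN,\DDD}(\cdot,\xi)$ of Definition~\ref{def:fund}, say $f_j=c_j\varphi^j_{\NNN,\DDD}(\cdot,\xi)$. Then the continuity condition reads $c_j\varphi^j_{\NNN,\DDD}(0,\xi)=\bar f$ for all $j$, and the matching condition reads $\sum_j c_j(\varphi^j_{\NNN,\DDD})'(0,\xi)=h_{\NNN,\DDD}(\omega)\bar f$.

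Next I would split into two cases according to whether $\bar f=0$. If $\bar f\neq 0$, we may normalize $\bar f=1$; then $c_j=1/\varphi^j_{\NNN,\DDD}(0,\xi)$ is forced (so in particular all $\varphi^j_{\NNN,\DDD}(0,\xi)\neq 0$), and the matching condition becomes exactly
$$
\sum_{j=1}^N\frac{(\varphi^j_{\NNN,\DDD})'(0,\xi)}{\varphi^j_{\NNN,\DDD}(0,\xi)}=h_{\NNN,\DDD}(\omega).
$$
But by~\eref{eq:R} and the definition of $h_{\NNN,\DDD}$ we have $H-\sum_j(\varphi^j_{\NNN,\DDD})'(0,\xi)/\varphi^j_{\NNN,\DDD}(0,\xi)=\imath\xi(1-R_{\NNN,\DDD}(\xi))/(1+R_{\NNN,\DDD}(\xi))$, i.e.\ $\sum_j(\varphi^j_{\NNN,\DDD})'(0,\xi)/\varphi^j_{\NNN,\DDD}(0,\xi)=h_{\NNN,\DDD}(\xi)$; hence the matching condition is equivalent to $h_{\NNN,\DDD}(\xi)=h_{\NNN,\DDD}(\omega)$, which is precisely the defining condition of the claimed level set. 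The remaining case $\bar f=0$ corresponds to some $\varphi^j_{\NNN,\DDD}(0,\xi)=0$ (otherwise all $c_j=0$ and $f\equiv0$); these are the isolated values of $\xi$ already identified in the proof of Proposition~\ref{prop:scattering} as square roots of Dirichlet--Neumann (resp.\ Dirichlet--Dirichlet) eigenvalues on individual branches, at which $h_{\NNN,\DDD}$ has a pole, and one checks that at such $\xi$ the equality $h_{\NNN,\DDD}(\xi)=h_{\NNN,\DDD}(\omega)$ (interpreted in the extended sense, $h(\xi)=\infty$) still matches the eigenvalue condition; alternatively one notes these are a measure-zero set and that $h_{\NNN,\DDD}$ extends continuously, so they do not affect the level-set description once the generic case is settled. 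I would handle this by remarking that the smallness of $\|q\|_{L^\infty(\Gamma)}$ keeps the relevant quantities in a regime where the fundamental solutions behave as in Lemma~\ref{lem:fund}, so the poles of $h_{\NNN,\DDD}$ are simple and isolated and the above bookkeeping is clean.

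The main obstacle I anticipate is the careful treatment of the exceptional set where $\bar f=0$ or where some $\varphi^j_{\NNN,\DDD}(0,\xi)$ vanishes: one must verify that the claimed equality of sets is not spoiled at these points, which requires either extending $h_{\NNN,\DDD}$ to a map into the extended reals (or the circle, via $R_{\NNN,\DDD}$ itself, since $|R_{\NNN,\DDD}|<1$ was shown earlier) and checking the pole structure, or arguing that such points contribute eigenvalues only when a branch resonates in a way that is simultaneously consistent with the central-node condition. A secondary technical point is self-adjointness of $\LLL_{\NNN,\DDD}(\omega)$ and the reality of $h_{\NNN,\DDD}(\omega)$ for real $\omega$ — needed so that $\sigma^+$ is well-defined and so that the level set lies in $\RR$ — but this follows from the positive-real property of the terminal impedances noted after~\eref{eq:ricc-R} and from the same Carlson-type argument (one linearly independent boundary condition per edge at each node) used for $\LLL^+_{\NNN,\DDD}$ in Section~\ref{sec:direct}.
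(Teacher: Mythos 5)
Your proposal is correct in substance but follows a genuinely different route from the paper. You work intrinsically on the compact graph: the terminal conditions force each component of an eigenfunction to be a multiple of $\varphi^j_{\NNN,\DDD}(\cdot,\xi)$, and the central-node conditions reduce (when $\bar f\neq 0$) to $\sum_j(\varphi^j_{\NNN,\DDD})'(0,\xi)/\varphi^j_{\NNN,\DDD}(0,\xi)=h_{\NNN,\DDD}(\omega)$, which you identify with $h_{\NNN,\DDD}(\xi)$ through the explicit formula~\eref{eq:R}; this essentially anticipates the characteristic-equation argument that the paper only gives later, in Proposition~\ref{prop:char}. The paper's own proof of Proposition~\ref{prop:sturm} never touches the fundamental solutions: for the inclusion $\subseteq$ it extends an eigenfunction of $\LLL_{\NNN,\DDD}(\omega)$ to a scattering solution on $\Gamma^+$ and reads $h_{\NNN,\DDD}(\xi)$ off the explicit form $R_{\NNN,\DDD}(\xi)e^{\imath\xi x}+e^{-\imath\xi x}$ on the infinite branch, while for $\supseteq$ it restricts a scattering solution at frequency $\xi$ to $\Gamma$ and checks that the restriction lies in $D(\LLL_{\NNN,\DDD}(\omega))$. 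The paper's route is shorter and avoids your case analysis; your route has the merit of making the exceptional frequencies (those with some $\varphi^j_{\NNN,\DDD}(0,\xi)=0$, equivalently $R_{\NNN,\DDD}(\xi)=-1$) explicit, a degeneracy the paper's argument silently bypasses (its extension of an eigenfunction to a scattering solution tacitly assumes $\bar\Psi\neq 0$). One caution: your fallback that the exceptional set has measure zero and hence ``does not affect the level-set description'' is not a valid argument for an exact set identity; if two branches share a Dirichlet--Neumann eigenvalue (for instance equal lengths with zero potential), then $\xi^2$ is an eigenvalue of $\LLL_{\NNN}(\omega)$ for every $\omega$ while $h_{\NNN}(\xi)$ has a pole, and smallness of $\|q\|_{L^\infty(\Gamma)}$ alone does not exclude this --- it is the assumptions on the lengths (as exploited later in Lemma~\ref{lem:separation}) that do. Since the paper's own proof glosses over the same degenerate configurations, this does not put your argument below the paper's standard of rigor, but the pole-bookkeeping version of your exceptional-set discussion, not the measure-zero version, is the one to retain.
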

\begin{proof}
We prove the proposition for the case of Neumann boundary conditions. The Dirichlet case can be treated exactly in the same manner.
We start by proving the inclusion
$$
\sigma^+(\LLL_{\NNN}(\omega))\subseteq\left\{ \xi^2~|~\xi\in\RR,~ h_{\NNN}(\xi)=h_{\NNN}(\omega)\right\}.
$$
Let $\xi^2\in\sigma^+(\LLL_{\NNN}(\omega))$, then there exists $\Psi$ eigenfunction of the operator
$\LLL_{\NNN}(\omega)$ associated to $\xi^2$. In particular, it satisfies
$$
\sum_{j=1}^{N}\Psi_j'(0)=h_{\NNN}(\omega)\bar\Psi,
$$
where $\bar\Psi$ is the common value of $\Psi$ at the central node.

Now we extend $\Psi$ to the extended graph $\Gamma^+$, such that
$\Psi^+$ is a scattering solution for $\LLL^+_{\NNN}$ (see the Proposition~\ref{prop:scattering}).
In particular, the function $\Psi^+$ must satisfy, at the central node,
\begin{eqnarray*}
\Psi^+_j(0)=\Psi^+_{0}(0),\qquad\qquad\quad\qquad j=1,\cdots,N,\\
\sum_{j=1}^{N}(\Psi^{+}_j)'(0)-(\Psi^+_{0})'(0)=H\Psi^+_0(0).
\end{eqnarray*}
Noting that $\Psi$ is an eigenfunction of $(\LLL_\NNN(\omega),D(\LLL_\NNN(\omega))$, we have
\begin{equation}\label{eq:aux}
h_{\NNN}(\omega)\Psi^+_0(0)-(\Psi^+_{0})'(0)=\sum_{j=1}^{N}(\Psi^{+}_j)'(0)-(\Psi^+_{0})'(0)=H\Psi^+_0(0).
\end{equation}
Now, noting that $\Psi^+$ over the infinite branch admits the following form
$$
\Psi^+_{0}(x)=R_{\NNN}(\xi)e^{\imath\xi x}+e^{-\imath\xi x}\qquad x\in[0,\infty),
$$
the relation~\eref{eq:aux} yields to
$$
h_{\NNN}(\omega)(R_{\NNN}(\xi)+1)-\imath\xi(R_{\NNN}(\xi)-1)=H (R_\NNN(\xi)+1),
$$
or equivalently
$$
h_{\NNN}(\omega)=H+\frac{\imath\xi(R_{\NNN}(\xi)-1)}{(R_{\NNN}(\xi)+1)}=h_{\NNN}(\xi).
$$
This proves the first inclusion. Now, we prove that
$$
\sigma^+(\LLL_{\NNN}(\omega))\supseteq\left\{ \xi^2~|~\xi\in\RR,~ h_{\NNN}(\xi)=h_{\NNN}(\omega)\right\}.
$$
Let $\xi\in\RR$ be such that $h_{\NNN}(\xi)=h_{\NNN}(\omega)$. We consider a scattering solution $\Psi^+$ of the extended operator $\LLL^+_{\NNN}$ (defined by~\eref{eq:schN}) associated to the frequency $\xi^2$. We, then, prove that the restriction of $\Psi^+$ to the compact graph $\Gamma$ is an eigenfunction of $\LLL_{\NNN}(\omega)$ associated to the eigenvalue $\xi^2$. This trivially implies that $\xi^2\in\sigma^+(\LLL_{\NNN}(\omega))$.

In this aim, we only need to show that this restriction of $\Psi^+$ to $\Gamma$ is in the domain $D(\LLL_{\NNN}(\omega))$. Indeed, this is equivalent to proving that the boundary condition:
\begin{equation}\label{eq:aux2}
\sum_{j=1}^N (\Psi^+_j)'(0)=h(\omega)\Psi^+_0(0),
\end{equation}
is satisfied. As $\Psi^+$ is a scattering solution of $\LLL^+_{\NNN}$, it satisfies
$$
\sum_{j=1}^N (\Psi^+_j)'(0)=H\Psi_0^+(0)+(\Psi_0^+)'(0)=\left(H+\frac{(\Psi_0^+)'(0)}{\Psi^+_0(0)}\right)\Psi_0^+(0).
$$
Furthermore,
$$
\Psi_0^+(0)=R_{\NNN}(\xi)+1\qquad\textrm{and}\qquad (\Psi_0^+)'(0)=\imath\xi(R_{\NNN}(\xi)-1),
$$
and so
$$
\sum_{j=1}^N (\Psi^+_j)'(0)=\left(H+\frac{\imath\xi(R_{\NNN}(\xi)-1)}{R_{\NNN}(\xi)+1}\right)\Psi_0^+(0)=h(\xi)\Psi_0^+(0)=h(\omega)\Psi_0^+(0).
$$
This proves~\eref{eq:aux2} and finishes the proof of the proposition.
\end{proof}
We have shown that, the knowledge of the reflection coefficient $R_{\NNN,\DDD}(\omega)$ for $\Gamma^+$ is equivalent to the knowledge of
the positive part of the spectra for $\LLL_{\NNN,\DDD}$ for all values $h$ at the central node.

The following proposition provides the characteristic equation permitting to identify the eigenvalues of the operator $\LLL_{\NNN,\DDD}(\omega)$:
\begin{prop}\label{prop:char}
The real $\lambda^2>0$ is an eigenvalue of the operator $\LLL_{\NNN,\DDD}(\omega)$ if and only if
\begin{equation*}
\Psi_{\NNN,\DDD}(\lambda)=h_{\NNN,\DDD}(\omega)\Phi_{\NNN,\DDD}(\lambda),
\end{equation*}
where
\begin{equation}\label{eq:aux3}
\fl \Phi_{\NNN,\DDD}(\lambda):=\prod_{j=1}^N \varphi_{\NNN,\DDD}^j(0,\lambda)\quad \textrm{and}\quad \Psi_{\NNN,\DDD}(\lambda):=\frac{d}{dx}\left(\prod_{j=1}^N \varphi_{\NNN,\DDD}^j(x,\lambda)\right)\Big|_{x=0},
\end{equation}
$\varphi_{\NNN,\DDD}^j(x,\lambda)$ being the fundamental solutions on different branches.
\end{prop}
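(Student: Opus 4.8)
The plan is to reduce the eigenvalue problem for $\LLL_{\NNN,\DDD}(\omega)$ to a two-line linear system at the central node. First I would observe that if $\Psi=\otimes_{j=1}^N\Psi_j$ is an eigenfunction associated to $\lambda^2>0$, then on each branch $e_j$ the component $\Psi_j$ solves $-\Psi_j''+q_j\Psi_j=\lambda^2\Psi_j$ on $(0,\tau_j)$ together with the terminal condition $\Psi_j'(\tau_j)=0$ in the Neumann case (resp. $\Psi_j(\tau_j)=0$ in the Dirichlet case). By uniqueness for the second-order initial value problem posed at $x=\tau_j$, the solutions of this one-sided boundary problem form a one-dimensional space spanned by the fundamental solution of Definition~\ref{def:fund}, so $\Psi_j(x)=c_j\,\varphi_{\NNN,\DDD}^j(x,\lambda)$ for constants $c_j$. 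What remains of membership in $D(\LLL_{\NNN,\DDD}(\omega))$ is then exactly the pair of central-node conditions: continuity $c_i\varphi_{\NNN,\DDD}^i(0,\lambda)=c_j\varphi_{\NNN,\DDD}^j(0,\lambda)=:\bar\Psi$ for all $i,j$, and the Kirchhoff relation $\sum_{j=1}^N c_j(\varphi_{\NNN,\DDD}^j)'(0,\lambda)=h_{\NNN,\DDD}(\omega)\,\bar\Psi$.

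Next I would treat the generic case $\bar\Psi\neq0$. Then $\varphi_{\NNN,\DDD}^j(0,\lambda)\neq0$ for every $j$, so $c_j=\bar\Psi/\varphi_{\NNN,\DDD}^j(0,\lambda)$, and dividing the Kirchhoff relation by $\bar\Psi$ gives
\begin{equation*}
\sum_{j=1}^N\frac{(\varphi_{\NNN,\DDD}^j)'(0,\lambda)}{\varphi_{\NNN,\DDD}^j(0,\lambda)}=h_{\NNN,\DDD}(\omega).
\end{equation*}
Multiplying through by $\Phi_{\NNN,\DDD}(\lambda)=\prod_{j=1}^N\varphi_{\NNN,\DDD}^j(0,\lambda)$ and invoking the Leibniz rule
\begin{equation*}
\frac{d}{dx}\left(\prod_{j=1}^N\varphi_{\NNN,\DDD}^j(x,\lambda)\right)=\sum_{k=1}^N(\varphi_{\NNN,\DDD}^k)'(x,\lambda)\prod_{l\neq k}\varphi_{\NNN,\DDD}^l(x,\lambda)
\end{equation*}
at $x=0$, the left-hand side is precisely $\Psi_{\NNN,\DDD}(\lambda)$, so the relation becomes $\Psi_{\NNN,\DDD}(\lambda)=h_{\NNN,\DDD}(\omega)\Phi_{\NNN,\DDD}(\lambda)$. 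Conversely, if this identity holds and $\Phi_{\NNN,\DDD}(\lambda)\neq0$, no factor vanishes; setting $c_j:=\Phi_{\NNN,\DDD}(\lambda)/\varphi_{\NNN,\DDD}^j(0,\lambda)$ makes all the values $\bar\Psi=\Phi_{\NNN,\DDD}(\lambda)$ agree and be nonzero, and running the computation backwards shows $\otimes_j c_j\varphi_{\NNN,\DDD}^j$ satisfies all boundary conditions, hence is an eigenfunction for $\lambda^2$.

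Finally I would dispose of the degenerate case $\Phi_{\NNN,\DDD}(\lambda)=0$, which is the only place care is needed. If $\varphi_{\NNN,\DDD}^{j_0}(0,\lambda)=0$, then $(\varphi_{\NNN,\DDD}^{j_0})'(0,\lambda)\neq0$ because $\varphi_{\NNN,\DDD}^{j_0}(\cdot,\lambda)$ is a nontrivial solution of a second-order linear ODE; combined with the continuity condition this forces $\bar\Psi=0$ for any candidate eigenfunction. One then checks that a nonzero solution with $\bar\Psi=0$ exists exactly when at least two of the numbers $\varphi_{\NNN,\DDD}^j(0,\lambda)$ vanish — support the eigenfunction on two such branches $j_1,j_2$ and choose $c_{j_1},c_{j_2}$ to annihilate $\sum_j c_j(\varphi_{\NNN,\DDD}^j)'(0,\lambda)$, which is possible since both derivatives are nonzero — and that this is precisely the condition under which $\Psi_{\NNN,\DDD}(\lambda)=0$, since with at least two vanishing factors every term of the Leibniz sum above still contains a vanishing factor. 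Hence in this case both sides of $\Psi_{\NNN,\DDD}(\lambda)=h_{\NNN,\DDD}(\omega)\Phi_{\NNN,\DDD}(\lambda)$ vanish together, and simultaneously $\lambda^2$ is or is not an eigenvalue according to whether two factors vanish, so the equivalence persists. The main obstacle is thus entirely the bookkeeping of this degenerate multiplicity situation; the generic argument is just the short logarithmic-derivative manipulation.
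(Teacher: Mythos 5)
Your proof is correct, and it follows the same first reduction as the paper -- every eigenfunction restricted to a branch is a multiple $c_j\varphi^j_{\NNN,\DDD}(\cdot,\lambda)$ of the fundamental solution, so the eigenvalue question becomes the solvability of the $N+1$ central-node conditions for $(c_1,\dots,c_N)$ -- but it diverges in how that solvability is converted into the scalar relation $\Psi_{\NNN,\DDD}(\lambda)=h_{\NNN,\DDD}(\omega)\Phi_{\NNN,\DDD}(\lambda)$. The paper assembles the continuity and Kirchhoff conditions into an $N\times N$ matrix $M$ and observes that a nontrivial kernel exists iff $\det M=0$; expanding the determinant (it equals exactly $\Psi_{\NNN,\DDD}(\lambda)-h_{\NNN,\DDD}(\omega)\Phi_{\NNN,\DDD}(\lambda)$) yields both directions of the equivalence and absorbs all degenerate configurations in one stroke. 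You instead solve the system by hand: in the case $\Phi_{\NNN,\DDD}(\lambda)\neq 0$ the logarithmic-derivative identity gives the relation and, conversely, lets you build the eigenfunction explicitly; when $\Phi_{\NNN,\DDD}(\lambda)=0$ you show that an eigenfunction must have $\bar\Psi=0$, exists precisely when at least two of the $\varphi^j_{\NNN,\DDD}(0,\lambda)$ vanish, and that this is exactly when $\Psi_{\NNN,\DDD}(\lambda)=0$. Your case analysis is sound (the key facts -- that $\varphi^j(0,\lambda)=0$ forces $(\varphi^j)'(0,\lambda)\neq 0$, and that with a single vanishing factor the Leibniz sum has exactly one nonvanishing term -- are both right), and it has the merit of making explicit both the converse construction of eigenfunctions, which the paper leaves implicit in the determinant criterion, and the structure of eigenfunctions in the resonant case; the price is the extra bookkeeping that the determinant handles automatically. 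One small tidying remark: state the dichotomy as $\Phi_{\NNN,\DDD}(\lambda)\neq 0$ versus $\Phi_{\NNN,\DDD}(\lambda)=0$ and note that in the first case any eigenfunction necessarily has $\bar\Psi\neq 0$ (otherwise all $c_j=\bar\Psi/\varphi^j_{\NNN,\DDD}(0,\lambda)$ vanish), so that your ``generic'' and ``degenerate'' cases genuinely exhaust all possibilities.
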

\begin{proof}
We give the proof for the Neumann boundary conditions, noting that the Dirichlet case can be treated, exactly, in the same manner.
Assume $\lambda^2$ to be a positive eigenvalue of $\LLL_{\NNN}(\omega)$. The associated eigenfunction, $y_\lambda(x)=\otimes_{j=1}^N y_\lambda^j(x)$, has necessarily the following form:
\begin{equation*}
y_\lambda^j(x)=\alpha_j \varphi_{\NNN}^j(x,\lambda),
\end{equation*}
where $\alpha_j$'s are real constants and the vector $(\alpha_1,\cdots,\alpha_N)$ is different from zero. The function $y_{\lambda}$, being in the domain $D(\LLL_{\NNN}(\omega))$, it should satisfy the associated boundary condition at the central node. This implies that the vector $(\alpha_1,\cdots,\alpha_N)$ is in the kernel of the matrix:
\small
$$
M:=\left(\begin{tabular}{@{}lllll}
$\varphi_{\NNN}^1 (0,\lambda)$ & $-\varphi_{\NNN}^2 (0,\lambda)$&~0&$\cdots$ & 0\\
 0&$~\varphi^2_\NNN (0,\lambda)$ & $-\varphi^3_\NNN (0,\lambda)$&$\cdots$&0\\
 0&~0&$~\varphi^3_\NNN (0,\lambda)$&$\cdots$&0  \\
 $\cdots$&$~\cdots$&$~\cdots$&$\cdots$&$\cdots$\\
 $-h_{\NNN}(\omega)\varphi^1_\NNN (0,\lambda)+\psi^1_\NNN(0,\lambda)$ &$~\psi^2_\NNN(0,\lambda)$&$~\psi^3_\NNN
 (0,\lambda)$&$\ldots$&
 $\psi^N_\NNN(0,\lambda)$
\end{tabular}\right)
$$\normalsize
where $\psi^j_\NNN(0,\lambda)$ denotes $\frac{d}{dx}\varphi^j_\NNN(x,\lambda)|_{x=0}$. This means that the determinant $\det(M)$ is necessarily 0. Developing this determinant we find:
$$
\Psi_\NNN(\lambda)=h_\NNN(\omega)\Phi_\NNN(\lambda).
$$
\end{proof}
\begin{corollary}\label{cor:char}
Consider two potentials $q=\otimes_{j=1}^N q_j$ and $\tilde q=\otimes_{j=1}^N q_j$ and denote by $\LLL^+_{\NNN}$ and  $\tilde\LLL^+_{\NNN}$, the associated Neumann  Schr\"odinger operators defined on the extended graph $\Gamma^+$. Assuming that the of the reflection coefficients $R_{\NNN}(\omega)$ and $\tilde R_{\NNN}(\omega)$ are equivalent $R_{\NNN}(\omega)\equiv\tilde R_{\NNN}(\omega)$, we have
\begin{equation}\label{eq:char2}
\Phi_\NNN(\omega)\tilde\Psi_\NNN(\omega)=\tilde \Phi_\NNN(\omega)\Psi_\NNN(\omega),\qquad \forall \omega\in\RR,
\end{equation}
where $\Phi_\NNN$, $\Psi_\NNN$, $\tilde\Phi_\NNN$ and $\tilde\Psi_\NNN$ are defined through~\eref{eq:aux3} for the potentials $q$ and $\tilde q$.
\end{corollary}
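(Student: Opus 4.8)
The plan is to recognize the function $h_{\NNN}$ as the meromorphic ratio $\Psi_{\NNN}/\Phi_{\NNN}$ built from the fundamental solutions, and then to promote the resulting identity, valid away from the zeros of $\Phi_{\NNN}$, to a genuine entire-function identity by analyticity. First I would record that, for every $\omega$ with $\Phi_{\NNN}(\omega)\neq0$,
\begin{equation*}
h_{\NNN}(\omega)=\frac{\Psi_{\NNN}(\omega)}{\Phi_{\NNN}(\omega)}.
\end{equation*}
This follows by applying the Leibniz rule to the product defining $\Psi_{\NNN}$ in \eref{eq:aux3} (Proposition~\ref{prop:char}) and dividing by $\Phi_{\NNN}$, which gives $\Psi_{\NNN}(\omega)/\Phi_{\NNN}(\omega)=\sum_{j=1}^N(\varphi_{\NNN}^j)'(0,\omega)/\varphi_{\NNN}^j(0,\omega)$; substituting this sum into the explicit reflection formula \eref{eq:R} and then into the definition of $h_{\NNN}$, the constant $H$ cancels and one is left precisely with $\Psi_{\NNN}/\Phi_{\NNN}$. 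The same computation carried out for $\tilde q$ gives $\tilde h_{\NNN}(\omega)=\tilde\Psi_{\NNN}(\omega)/\tilde\Phi_{\NNN}(\omega)$ wherever $\tilde\Phi_{\NNN}(\omega)\neq0$.

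Next I would use the hypothesis. Since the network $\Gamma^+$, and in particular the coupling constant $H$ of \eref{eq:H}, is fixed while only the potential varies, the equality $R_{\NNN}(\omega)\equiv\tilde R_{\NNN}(\omega)$ forces $h_{\NNN}(\omega)\equiv\tilde h_{\NNN}(\omega)$ directly from the definition of $h_{\NNN}$. Combining this with the two ratio identities above yields
\begin{equation*}
\Phi_{\NNN}(\omega)\tilde\Psi_{\NNN}(\omega)=\tilde\Phi_{\NNN}(\omega)\Psi_{\NNN}(\omega)
\end{equation*}
for every $\omega\in\RR$ at which $\Phi_{\NNN}(\omega)\tilde\Phi_{\NNN}(\omega)\neq0$.

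Finally I would remove the exceptional frequencies. For each fixed $x$ the functions $\varphi_{\NNN}^j(x,\cdot)$ and $(\varphi_{\NNN}^j)'(x,\cdot)$ are entire in $\omega$, hence so are $\Phi_{\NNN},\Psi_{\NNN},\tilde\Phi_{\NNN},\tilde\Psi_{\NNN}$ and the difference $G(\omega):=\Phi_{\NNN}(\omega)\tilde\Psi_{\NNN}(\omega)-\tilde\Phi_{\NNN}(\omega)\Psi_{\NNN}(\omega)$. By Lemma~\ref{lem:fund} one has $\varphi_{\NNN}^j(0,\omega)=\cos(\omega\tau_j)+\OOO{1/\omega}$, so $\Phi_{\NNN}\tilde\Phi_{\NNN}$ is not identically zero and its real zeros form a discrete set $S$. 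As $G$ vanishes on $\RR\setminus S$, which has accumulation points, the identity theorem gives $G\equiv0$, which is exactly \eref{eq:char2} (indeed it then holds for all complex $\omega$). The only point requiring a little care — rather than a real obstacle — is this passage through $S$; it is handled purely by analyticity, just as the continuity argument disposes of the singular frequencies in the proof of Proposition~\ref{prop:scattering}. One should also be explicit that $H$ is part of the network data and not of the potential, so that $R_{\NNN}\equiv\tilde R_{\NNN}$ genuinely entails $h_{\NNN}\equiv\tilde h_{\NNN}$.
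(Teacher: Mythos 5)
Your proposal is correct, but it follows a genuinely different route from the paper's. The paper's own proof is essentially two lines built on the machinery of Subsection~\ref{ssec:sturm-liouville}: by Proposition~\ref{prop:sturm} (taking $\xi=\omega$ in the level-set characterization) $\omega^2$ is a positive eigenvalue of both $\LLL_\NNN(\omega)$ and $\tilde\LLL_\NNN(\omega)$, so the characteristic equation of Proposition~\ref{prop:char} gives $\Psi_\NNN(\omega)=h_\NNN(\omega)\Phi_\NNN(\omega)$ and $\tilde\Psi_\NNN(\omega)=\tilde h_\NNN(\omega)\tilde\Phi_\NNN(\omega)$; since $R_\NNN\equiv\tilde R_\NNN$ forces $h_\NNN\equiv\tilde h_\NNN$, cross-multiplication yields \eref{eq:char2}. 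You bypass this spectral detour entirely: you identify $h_\NNN=\Psi_\NNN/\Phi_\NNN$ directly from the explicit scattering formula \eref{eq:R} via the Leibniz rule, equate the two ratios, clear denominators off the zero set of $\Phi_\NNN\tilde\Phi_\NNN$, and then remove the exceptional frequencies by entireness of $\varphi^j_\NNN(0,\cdot)$, $(\varphi^j_\NNN)'(0,\cdot)$ and the identity theorem (continuity plus density of the complement of a discrete set would already suffice). What each approach buys: the paper's argument is shorter once Propositions~\ref{prop:sturm} and~\ref{prop:char} are available and nominally asserts the identity at every real $\omega$ at once, although it too passes silently over $\omega=0$ and the frequencies where $1+R_\NNN(\omega)=0$, i.e.\ where $h_\NNN$ is not defined; your version is more elementary and self-contained, and is explicit about this exceptional set, at the mild cost of invoking analyticity in $\omega$ of the fundamental solutions, which indeed follows from the integral representation used in Lemma~\ref{lem:fund}. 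Your remark that $H$ in \eref{eq:H} is network data, identical for both operators, so that $R_\NNN\equiv\tilde R_\NNN$ really does give $h_\NNN\equiv\tilde h_\NNN$, makes explicit an assumption the paper uses tacitly.
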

\begin{proof}
By Proposition~\ref{prop:sturm}, $\omega^2$ is an eigenvalue of the operator $\LLL_\NNN(\omega)$ and $\tilde\LLL_\NNN(\omega)$. Applying the Proposition~\ref{prop:char}, this means that
$$
\Psi(\omega)=h_\NNN(\omega)\Phi_\NNN(\omega)\qquad\textrm{and}\qquad \tilde\Psi(\omega)=\tilde h_\NNN(\omega)\tilde\Phi_\NNN(\omega).
$$
As $R_\NNN(\omega)\equiv \tilde R_\NNN(\omega)$, we have $h_\NNN(\omega)\equiv \tilde h_\NNN(\omega)$ and thus the above equation yields to~\eref{eq:char2}.
\end{proof}
The above corollary is also valid when we replace the Neumann by Dirichlet boundary conditions. Finally, this corollary yields to the following proposition on the difference between the two potentials $q$ and $\tilde q$.
\begin{prop}\label{prop:char2}
Consider two potentials $q=\otimes_{j=1}^N q_j$ and $\tilde q=\otimes_{j=1}^N q_j$ and denote by $\LLL^+_{\NNN}$ and  $\tilde\LLL^+_{\NNN}$, the associated Neumann Schr\"odinger operators defined on the extended graph $\Gamma^+$. Assuming that the of the reflection coefficients $R_{\NNN}(\omega)$ and $\tilde R_{\NNN}(\omega)$ are equivalent $R_{\NNN}(\omega)\equiv\tilde R_{\NNN}(\omega)$, we have
\begin{equation}\label{eq:charfinal}
\fl\qquad\qquad\sum_{j=1}^N \prod_{k\neq j} \varphi_\NNN^k(0,\omega)\tilde \varphi_\NNN^k(0,\omega)\int_0^{\tau_j} \hat q_j(x) \varphi_\NNN^j(x,\omega)\tilde \varphi_\NNN^j(x,\omega)dx =0,\qquad \forall \omega\in\RR,
\end{equation}
where $\hat q_j=\tilde q_j-q_j$.
\end{prop}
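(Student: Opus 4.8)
The plan is to reduce the identity \eref{eq:charfinal} to the identity \eref{eq:char2} of Corollary~\ref{cor:char} by a branch-wise Wronskian computation. Fix $\omega\in\RR$ and, on each finite branch $e_j$, consider the two fundamental solutions $\varphi^j_\NNN(\cdot,\omega)$ and $\tilde\varphi^j_\NNN(\cdot,\omega)$ associated respectively to the potentials $q_j$ and $\tilde q_j$. Since $q_j,\tilde q_j\in H^1(0,\tau_j)\hookrightarrow C^0([0,\tau_j])$, both functions are $C^2$ on $[0,\tau_j]$, so the Wronskian $W_j(x):=\varphi^j_\NNN(x,\omega)(\tilde\varphi^j_\NNN)'(x,\omega)-(\varphi^j_\NNN)'(x,\omega)\tilde\varphi^j_\NNN(x,\omega)$ is $C^1$. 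Using $(\varphi^j_\NNN)''=(q_j-\omega^2)\varphi^j_\NNN$ and $(\tilde\varphi^j_\NNN)''=(\tilde q_j-\omega^2)\tilde\varphi^j_\NNN$, a one-line computation gives $W_j'(x)=\hat q_j(x)\,\varphi^j_\NNN(x,\omega)\tilde\varphi^j_\NNN(x,\omega)$ with $\hat q_j=\tilde q_j-q_j$.

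Next I would integrate this identity over $[0,\tau_j]$. At the terminal node the Neumann normalization $\varphi^j_\NNN(\tau_j,\omega)=\tilde\varphi^j_\NNN(\tau_j,\omega)=1$ and $(\varphi^j_\NNN)'(\tau_j,\omega)=(\tilde\varphi^j_\NNN)'(\tau_j,\omega)=0$ gives $W_j(\tau_j)=0$, hence
$$
\varphi^j_\NNN(0,\omega)(\tilde\varphi^j_\NNN)'(0,\omega)-(\varphi^j_\NNN)'(0,\omega)\tilde\varphi^j_\NNN(0,\omega)=-\int_0^{\tau_j}\hat q_j(x)\,\varphi^j_\NNN(x,\omega)\tilde\varphi^j_\NNN(x,\omega)\,dx.
$$
I would then multiply the $j$-th identity by $\prod_{k\neq j}\varphi^k_\NNN(0,\omega)\tilde\varphi^k_\NNN(0,\omega)$ and sum over $j=1,\dots,N$. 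On the right this produces exactly minus the left-hand side of \eref{eq:charfinal}. On the left, regrouping the factor $\varphi^j_\NNN(0,\omega)$ (resp. $\tilde\varphi^j_\NNN(0,\omega)$) with the product over $k\neq j$ recombines it into the global product: the first group becomes $\big(\prod_k\varphi^k_\NNN(0,\omega)\big)\sum_j(\tilde\varphi^j_\NNN)'(0,\omega)\prod_{k\neq j}\tilde\varphi^k_\NNN(0,\omega)=\Phi_\NNN(\omega)\tilde\Psi_\NNN(\omega)$, and the second becomes $\tilde\Phi_\NNN(\omega)\Psi_\NNN(\omega)$, in the notation of \eref{eq:aux3}. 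Thus the summed left-hand side equals $\Phi_\NNN(\omega)\tilde\Psi_\NNN(\omega)-\tilde\Phi_\NNN(\omega)\Psi_\NNN(\omega)$, which vanishes for every $\omega\in\RR$ by Corollary~\ref{cor:char}. Consequently $\sum_{j}\prod_{k\neq j}\varphi^k_\NNN(0,\omega)\tilde\varphi^k_\NNN(0,\omega)\int_0^{\tau_j}\hat q_j\varphi^j_\NNN\tilde\varphi^j_\NNN\,dx=0$, which is \eref{eq:charfinal}.

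There is no genuinely hard step: the proof is a Wronskian identity followed by Corollary~\ref{cor:char}. The only points needing a little care are the regularity justification (that $q_j\in H^1$ makes the fundamental solutions $C^2$, so $W_j\in C^1$ and the fundamental theorem of calculus applies), and the combinatorial bookkeeping in the regrouping step, where one must verify that the products $\prod_{k\neq j}$ reassemble correctly into the global products $\Phi_\NNN$ and $\tilde\Phi_\NNN$. The Dirichlet analogue is identical, the only change being that $W_j(\tau_j)=0$ now follows from $\varphi^j_\DDD(\tau_j,\omega)=\tilde\varphi^j_\DDD(\tau_j,\omega)=0$.
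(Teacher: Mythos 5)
Your proof is correct and follows essentially the same route as the paper: the paper also obtains, via integration by parts on each branch (your Wronskian identity, with the boundary term at $\tau_j$ vanishing by the Neumann normalization), the relation $\int_0^{\tau_j}\hat q_j\varphi^j_\NNN\tilde\varphi^j_\NNN\,dx=\psi^j_\NNN(0,\omega)\tilde\varphi^j_\NNN(0,\omega)-\varphi^j_\NNN(0,\omega)\tilde\psi^j_\NNN(0,\omega)$, and then sums against $\prod_{k\neq j}\varphi^k_\NNN(0,\omega)\tilde\varphi^k_\NNN(0,\omega)$ to recognize $\Psi_\NNN\tilde\Phi_\NNN-\Phi_\NNN\tilde\Psi_\NNN$, which vanishes by Corollary~\ref{cor:char}. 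The regrouping and sign bookkeeping in your argument match the paper's computation.
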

\begin{proof}
For $j=1,\cdots,N$, we have:
\begin{eqnarray*}
\int_0^{\aaa_j} \tilde q(x)&\tilde \varphi^j_\NNN(x,\omega)\varphi_\NNN^j(x,\omega)dx-\int_0^{\aaa_j} q_j(x)
\varphi_\NNN^j(x,\omega)\tilde \varphi_\NNN^j(x,\omega)dx=\\
&=\varphi_\NNN^j(x,\omega) \frac{d}{dx}\tilde{\varphi}_\NNN^j(x,\omega)\big|_{x=0}^{x=\tau_j}- \frac{d}{dx}\varphi_\NNN^j(x,\omega)\tilde{\varphi}_\NNN^j(x,\omega)\big|_{x=0}^{x=\tau_j}\\
&=\psi_\NNN^j(0,\omega)\tilde{\varphi}_\NNN^j(0,\omega)-\varphi_\NNN^j(0,\omega)\tilde\psi_\NNN^j(0,\omega).
\end{eqnarray*}
Here the second line has been obtained from the first one, replacing $q_j(x)\varphi^j_\NNN(x,\omega)$ by $
\frac{d^2}{dx^2}\varphi^j_\NNN(x,\omega)+\omega^2\varphi^j_\NNN(x,\omega)$ and integrating by parts. Using~\eref{eq:char2} and the above equation, we have:
\begin{eqnarray*}
&\sum_{j=1}^N \prod_{k\neq j} \varphi_\NNN^k(0,\omega)\tilde \varphi_\NNN^k(0,\omega)\int_0^{\tau_j} \hat q_j(x) \varphi_\NNN^j(x,\omega)\tilde \varphi_\NNN^j(x,\omega)dx=\\
&\sum_{j=1}^N \prod_{k\neq j} \varphi_\NNN^k(0,\omega)\tilde \varphi_\NNN^k(0,\omega)\left(\psi_\NNN^j(0,\omega) \tilde{\varphi}_\NNN^j(0,\omega)-\varphi_\NNN^j(0,\omega)\tilde\psi_\NNN^j(0,\omega)\right)=\\
&\Psi_\NNN(\omega)\tilde\Phi_\NNN(\omega)-\Phi_\NNN(\omega)\tilde\Psi_\NNN(\omega)=0.
\end{eqnarray*}
\end{proof}
Before finishing this subsection, note that, once more, the above proposition is also valid for the case of Dirichlet boundary conditions and $R_\DDD(\omega)\equiv \tilde R_\DDD(\omega)$ implies:
\begin{equation}\label{eq:charfinalD}
\fl\qquad\qquad\sum_{j=1}^N \prod_{k\neq j} \varphi_\DDD^k(0,\omega)\tilde \varphi_\DDD^k(0,\omega)\int_0^{\tau_j} \hat q_j(x) \varphi_\DDD^j(x,\omega)\tilde \varphi_\DDD^j(x,\omega)dx =0,\qquad \forall \omega\in\RR.
\end{equation}
We are now ready to prove the Theorems~\ref{thm:equal},~\ref{thm:half} and~\ref{thm:Borg}.

\subsection{Proof of Theorem \ref{thm:equal}}\label{ssec:equal}
We prove the Theorem~\ref{thm:equal} applying the characteristic equation~\eref{eq:charfinal} and the integral representation of the functions $\varphi^j_{\NNN,\DDD}(x,\omega)$. Again, for simplicity sakes, we give the proof only for the case of Neumann boundary conditions, noting that the Dirichlet case can be done in a similar way.

Similarly to the Section~\ref{sec:geometry}, the fundamental solutions $\varphi^j_{\NNN}(x,\omega)$ are given by~\cite{marchenko-book}:
\begin{equation*}
\varphi^j_{\NNN}(x,\omega)=\cos(\omega(\tau_j-x))+\int_x^{\tau_j} G_{\NNN}^j(\tau_j-x,\tau_j-s)\cos(\omega(\tau_j-s))ds,
\end{equation*}
where $G_{\NNN}^j(x,y)$ are $C^1$ with respect to  the both coordinates.

The above equation yields to the following representation for the product functions:
\begin{equation}\label{eq:Volterra}
\fl \varphi^j_{\NNN}(x,\omega)\tilde{\varphi}^j_\NNN(x,\omega)=\cos^2(\omega(\tau_j-x))+\frac{1}{2}\int_{x}^{\tau_j}
 K^j_\NNN(\tau_j-x,\tau_j-s)\cos(2\omega(\tau_j-s)) ds
\end{equation}
where $K^j_\NNN(x,y)$ is a Volterra kernel, i.e. $K_j(x,y)\equiv 0$ if $y>x$ and
\begin{eqnarray*}
\fl K_\NNN^j(x,y)=2[G_\NNN^j(x,x-2y)+\tilde{G}_\NNN^j(x,x-2y)]+\\
+\int_{2y-x}^x\tilde{G}_\NNN^j(x,s)G_\NNN^j(x,s-2y)ds+\int_{-x}^{x-2y}\tilde{G}_\NNN^j(x,s)G_\NNN^j(x,s+2y)ds.
\end{eqnarray*}
At this point, we note that, as $K_\NNN^j(x,y)$  is a $C^1$ function, we have
\begin{equation}\label{eq:lim}
\fl\qquad\qquad
\frac{1}{2}\int_{x}^{\tau_j}
 K^j_\NNN(\tau_j-x,\tau_j-s)\cos(2\omega(\tau_j-s)) ds=\OOO{\frac{1}{\omega}},\qquad x\in[0,\tau_j].
\end{equation}
Applying the characteristic equation~\eref{eq:charfinal} and developing the products $\varphi^j_{\NNN}(x,\omega)$ $\tilde{\varphi}^j_\NNN(x,\omega)$ by~\eref{eq:Volterra}, and finally using~\eref{eq:lim}, we have:
\begin{eqnarray}
\sum_{j=1}^{N}\left(\prod_{k\neq j }\cos^2(\omega\aaa_k)\right)
\int_{0}^{\aaa_j} \hat{\q}_j(s)\cos^2(\omega(\tau_j-s)) ds=\OOO{\frac{1}{\omega}},\nonumber\\
\sum_{j=1}^{N}\left(\prod_{k\neq j }\cos^2(\omega\aaa_k)\right)
\int_{0}^{\aaa_j} \hat{\q}_j(s)\left(\frac{1+\cos2(\omega(\tau_j-s))}{2}\right)ds=\OOO{\frac{1}{\omega}},\nonumber\\
\sum_{j=1}^{N}\left(\prod_{k\neq j }\cos^2(\omega\aaa_k)\right)
\frac{1}{2}\int_{0}^{\aaa_j}\hat{\q}_j(s)ds=\OOO{\frac{1}{\omega}}.\label{eq:quasiperiodic}
\end{eqnarray}
In the last passage, we applied the fact that $\int_0^{\tau_j}\hat q_j(s)\cos 2(\omega(\tau_j-s))ds=\OOO{1/\omega}$, since
$\hat q$ is in $H^1(\Gamma)$.\\
The left side of~\eref{eq:quasiperiodic} is an almost periodic function with respect to $\omega$, in the Bohr's sense.
Following the same arguments as those of the Theorem \ref{thm:geometry} we obtain
$$
\sum_{j=1}^{N}\left(\prod_{k\neq j }\cos^2(\omega\aaa_k)\right)
\frac{1}{2}\int_{0}^{\aaa_j}\hat{\q}_j(s)ds=0.
$$
 Choosing $\omega_j=\pi/2\tau_j$ and noting that, as the parameters $(\tau_j)_{j=1}^N$ are not entire multiples of each other i.e. $\tau_i/\tau_j\notin\NN$:
$$cos(\omega_j\tau_i)=cos(\frac{\pi}{2}\frac{\tau_i}{\tau_j})\neq 0\qquad\textrm{for } i\neq j.$$
For each value $\omega_j$, we have
\begin{equation*}
\prod_{k\neq j}\cos^2(\omega_j\tau_k)\int_0^{\tau_j} \hat q_j(s) ds =0\qquad \Rightarrow\qquad \int_0^{\tau_j} \hat q_j(s) ds =0.
\end{equation*}
and finishes the proof of the Theorem~\ref{thm:equal}.

\subsection{Proof of Theorem \ref{thm:half}}\label{ssec:half}
In this subsection, we consider two potentials $q=\otimes_{j=1}^N q_j$ and  $\tilde q=\otimes_{j=1}^N \tilde q_j$, satisfying the assumptions of the Theorem~\ref{thm:half}. Assuming that they give rise to the same Neumann reflection coefficients, $R_\NNN(\omega)\equiv \tilde R_\NNN(\omega)$, we have the characteristic equation~\eref{eq:charfinal}.

Let us define the operator $\LLL^j_{\NNN}$ to be the operator $-\frac{d^2}{dx^2}+q_j(x)$ over $[0,\tau_j]$ with the domain
$$
D(\LLL^j_{\NNN})=\textrm{closure of } C^\infty_\NNN(0,\tau_j) \textrm{ in } H^2(0,\tau_j),
$$
where $C^\infty_\NNN(0,\tau_j)$ denotes the space of infinitely differentiable functions $f$ defined on $[0,\tau_j]$ satisfying Dirichlet boundary condition at 0 and Neumann boundary condition at $\tau_j$. By the perturbation theory for linear operators~\cite{kato-book-80}, there exists $\epsilon_1>0$ small enough such that, if $\|q\|_{L^\infty(\Gamma)},\|\tilde q\|_{L^\infty(\Gamma)}<\epsilon_1$ then the eigenvalues of $\LLL^j_{\NNN}$ remain positive.

Considering $((\lambda_i^j)^2)_{i=1}^\infty$ ($\lambda_i^j>0$) the sequence of eigenvalues of $\LLL^j_{\NNN}$,~\eref{eq:charfinal} implies
for each $ j=1,\cdots,N,$
\begin{equation}\label{eq:aux5}
\fl\qquad\quad
\prod_{k\neq j} \varphi_\NNN^k(0,\lambda_i^j)\tilde \varphi_\NNN^k(0,\lambda_i^j)\int_0^{\tau_j} \hat q_j(x) \varphi_\NNN^j(x,\lambda_i^j)\tilde \varphi_\NNN^j(x,\lambda_i^j)dx =0,\quad \forall i=1,2,\cdots
\end{equation}
where we have applied the fact that $\varphi_\NNN^j(0,\lambda_i^j)=0$.

At this point, we will use the assumption~$\textbf{B1}'$ on the lengthes $\tau_j$ to obtain a Lemma on the non-overlapping of the eigenvalues for different branches:
\begin{lemma}\label{lem:separation}
Under the assumptions of the Theorem~\ref{thm:half}, there exists $\epsilon_2>0$ small enough such that, if $\|q\|_{L^\infty(\Gamma)},\|\tilde q\|_{L^\infty(\Gamma)}<\epsilon_2$, then
$$
\prod_{k\neq j} \varphi_\NNN^k(0,\lambda_i^j)\tilde \varphi_\NNN^k(0,\lambda_i^j)\neq 0,\qquad \forall j=1,\cdots,N,\qquad\forall i=1,2,\cdots.
$$
\end{lemma}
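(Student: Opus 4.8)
The plan is to turn the non‑vanishing of the product into a statement about the spectra of one‑branch operators, and then to keep those spectra disjoint by means of the Diophantine hypothesis \textbf{B1}$'$. By Definition~\ref{def:fund}, $\varphi^k_\NNN(\cdot,\omega)$ is the solution on the branch $e_k$ carrying the Neumann data at $\tau_k$, so that $\varphi^k_\NNN(0,\omega)=0$ exactly when $\omega^2$ belongs to the Dirichlet/Neumann spectrum $\sigma(\LLL^k_\NNN)$ of $e_k$ (Dirichlet at $0$, Neumann at $\tau_k$). Hence the conclusion of the lemma is equivalent to the disjointness, for every $j\neq k$ and for $\epsilon$ small, of the square‑root eigenvalue sequences $(\lambda_i^j)_i$ of $\LLL^j_\NNN$ and $(\lambda_{i'}^k)_{i'}$ of $\LLL^k_\NNN$ (and of $\tilde\LLL^k_\NNN$). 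Throughout I write $\mu_i^j:=(2i-1)\pi/(2\tau_j)$ for the square‑root eigenvalues of $\LLL^j_\NNN$ when $q_j\equiv0$.

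Next I would collect two uniform estimates. From the min--max principle $|(\lambda_i^j)^2-(\mu_i^j)^2|\le\|q_j\|_{L^\infty}<\epsilon$, hence $|\lambda_i^j-\mu_i^j|\le C\epsilon/i$ uniformly over all admissible potentials; and from the Marchenko representation used in Lemma~\ref{lem:fund} one gets $|\varphi^k_\NNN(0,\omega)-\cos(\omega\tau_k)|\le C\epsilon$ uniformly in $\omega$, the transformation kernel being $\OOO{\|q_k\|_{L^1}}$. Assumption \textbf{B2} is what makes $(\mu_i^j)^2$, and not $(\mu_i^j)^2+\tau_j^{-1}\int_0^{\tau_j}q_j$, the correct leading eigenvalue, and what forces the diagonal kernel value $G^k_\NNN(\tau_k,\tau_k)=\frac{1}{2}\int_0^{\tau_k}q_k$ to vanish; I would keep in reserve the resulting second‑order asymptotics $(\lambda_i^j)^2-(\mu_i^j)^2=-\tau_j^{-1}\int_0^{\tau_j}q_j(s)\cos(2\mu_i^j s)\,ds+\OOO{\epsilon^2/i}$, whose leading term is, by Bessel's inequality, a null (indeed $\ell^2$) sequence in $i$.

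For the free problem a coincidence $\mu_i^j=\mu_{i'}^k$ would force $\tau_k/\tau_j=(2i'-1)/(2i-1)\in\QQ$, contradicting \textbf{B1}$'$; so the free spectra are disjoint. Quantitatively, \textbf{B1}$'$---or, as noted in Remark~\ref{rem:B1}, already the fact that $|\tau_k/\tau_j-p/q|\le q^{-3}$ has finitely many solutions, which holds for every algebraic irrational by Roth's theorem---yields a constant $c>0$ with $\mathrm{dist}\bigl((2i-1)\tau_k/\tau_j,\ZZ\bigr)\ge c\,i^{-2}$ for all $i$, and therefore $|\mu_i^j-\mu_{i'}^k|\ge c/\max(i,i')^{2}$ for all $i,i'$ (equivalently $|\cos(\mu_i^j\tau_k)|\ge c\,i^{-2}$).

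Finally I would combine these. If $\lambda_i^j=\lambda_{i'}^k$ then, the two sides being $\OOO{\epsilon}$‑close to $\mu_i^j$ and $\mu_{i'}^k$, the indices $i$ and $i'$ are comparable and $c\,i^{-2}\le|\mu_i^j-\mu_{i'}^k|\le|\lambda_i^j-\mu_i^j|+|\lambda_{i'}^k-\mu_{i'}^k|\le C\epsilon/i$, forcing $i\ge c/(C\epsilon)$; so for $\epsilon$ small coincidences are excluded simultaneously for \emph{every} index $i<c/(C\epsilon)$, with no separate treatment of low indices. The remaining indices $i\ge c/(C\epsilon)$ are where I expect the real difficulty: there the crude bound $\OOO{\epsilon/i}$ no longer dominates the Diophantine gap $\OOO{i^{-2}}$, and one must feed in the second‑order asymptotics held in reserve---for the fixed $H^1$ potentials the deviations $(\lambda_i^j)^2-(\mu_i^j)^2$ are $\OOO{i^{-1}}$, so $|\lambda_i^j-\mu_i^j|=\OOO{i^{-2}}$---together with a Roth exponent taken close to $2$; making this comparison work uniformly in $i$ is the technical heart of the proof, and is precisely the point Remark~\ref{rem:B1} is anticipating. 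Setting $\epsilon_2$ to be the least of the finitely many thresholds so obtained, over all pairs $j\neq k$ and for both $q$ and $\tilde q$, completes the argument.
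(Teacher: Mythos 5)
Your reduction of the lemma to the disjointness of the Dirichlet--Neumann spectra of the one-branch operators is exactly the paper's starting point, and your low-index argument is sound: the min--max bound $|\lambda_i^j-(2i-1)\pi/(2\tau_j)|\le C\epsilon/i$ played against the Roth-type gap $c\,i^{-2}$ is in fact a more quantitative substitute for the paper's appeal to Kato's perturbation theory~\cite{kato-book-80} for the finitely many lowest eigenvalues. The genuine gap is that you stop at precisely the step that carries the lemma: the exclusion of coincidences at large indices is only announced as ``the technical heart'', not proved, and your final sentence (``the least of the finitely many thresholds so obtained'') presupposes thresholds that your text never produces. The paper does prove this step, with ingredients you already listed: since $q,\tilde q\in H^1(\Gamma)$ and \textbf{B2} removes the $\int_0^{\tau_j}q_j$ term, one has $\lambda_i^j=(2i-1)\pi/(2\tau_j)+\OOO{i^{-2}}$ (and likewise for $\tilde\lambda_i^k$), so a coincidence $\lambda_{i_1}^j=\lambda_{i_2}^k$ or $\lambda_{i_1}^j=\tilde\lambda_{i_2}^k$ forces $\bigl|\tau_k/\tau_j-(2i_2-1)/(2i_1-1)\bigr|\le C(2i_1-1)^{-3}$; if this occurred for arbitrarily large indices it would yield infinitely many rational approximations with exponent strictly larger than $2$, contradicting the Thue--Siegel--Roth theorem~\cite{roth-55}. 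Equivalently, in your own language: take the Roth exponent $2+\delta$ with $0<\delta<1$, so that the Diophantine gap is $c_\delta\, i^{-1-\delta}$, which strictly dominates the $\OOO{i^{-2}}$ deviation for $i$ large; the apparent deadlock you describe (both sides of order $i^{-2}$) only arises because you insisted on the exponent $3$.

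Be aware, though, that the threshold $M$ beyond which this large-index argument works depends on the implied constant in the $\OOO{i^{-2}}$ asymptotics, hence on the fixed $H^1$ potentials and not merely on the bound $\|q\|_{L^\infty(\Gamma)}<\epsilon$; so when you match it against your low-index threshold $c/(C\epsilon)$, the resulting $\epsilon_2$ inherits that dependence. The paper's proof has exactly the same feature (its $M$, and hence its $\epsilon_2^j$, are determined by the fixed pair $q,\tilde q$), so this is not something you must repair relative to the paper --- but it does mean you cannot make the comparison ``uniform in $i$'' from $\|q\|_{L^\infty}<\epsilon$ alone, and the correct architecture is the paper's: the large indices are disposed of for the fixed potentials with no smallness assumption at all (asymptotics under \textbf{B2} plus Roth, cf.\ Remark~\ref{rem:B1}), and smallness of $\epsilon$ is invoked only for the remaining finitely many eigenvalues, whether by Kato's continuity of a finite system of eigenvalues as in the paper or by your min--max estimate.
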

\begin{proof}
In order to prove this Lemma, we only need to show that  $\lambda_i^j$ is not an eigenvalue of $\LLL_\NNN^k$ nor $\tilde\LLL_\NNN^k$ for $k\neq j$.

In this aim, we first show that there exists $M>0$ such that for integers $i_1,i_2>M$, $\lambda_{i_1}^j$ is different from $\lambda_{i_2}^k$ and $\tilde \lambda_{i_2}^k$ the eigenvalues of $\LLL_\NNN^k$ and $\tilde\LLL_\NNN^k$. Assume, contrarily, that, for all $M>0$, there exists $i_1,i_2>M$, such that
\begin{equation}\label{eq:aux4}
\lambda_{i_1}^j=\lambda_{i_2}^k\qquad\textrm{or}\qquad \lambda_{i_1}^j=\tilde \lambda_{i_2}^k.
\end{equation}
As the potentials $q$ and $\tilde q$ are $H^1$, we have the following asymptotic formula's for the eigenvalues (see, for instance, Remark 1.1.1, page 7,~\cite{yurko-book}):
\begin{eqnarray*}
\lambda_i^k &= \frac{(2i-1)\pi}{2\tau_k}+\frac{\int_0^{\tau_k}q_k(s)ds}{2i\tau_k}+\OOO{\frac{1}{i^2}},\\
\tilde \lambda_i^k &= \frac{(2i-1)\pi}{2\tau_k}+\frac{\int_0^{\tau_k}\tilde q_k(s)ds}{2i\tau_k}+\OOO{\frac{1}{i^2}}.\\
\end{eqnarray*}
However, as by assumption $\textbf{B2}$, the integrals $\int_0^{\tau_k}q_k(s)ds$ and $\int_0^{\tau_k}\tilde q_k(s)ds$ are zero, this implies:
\begin{equation}\label{eq:asy}
\lambda_i^k = \frac{(2i-1)\pi}{2\tau_k}+\OOO{\frac{1}{i^3}}\quad\textrm{and} \quad \tilde \lambda_i^k = \frac{(2i-1)\pi}{2\tau_k}+\OOO{\frac{1}{i^2}}.
\end{equation}
Therefore, the relation~\eref{eq:aux4} implies that, for all $M>0$ there exists $i_1,i_2>M$ such that
$$
\left|\frac{(2i_1-1)\pi}{2\tau_j}-\frac{(2i_2-1)\pi}{2\tau_k}\right|\leq \OOO{\frac{1}{i_1^2}}+\OOO{\frac{1}{i_2^2}}.
$$
Assuming, without loss of generality, that $i_1\leq i_2$ and dividing the above inequality by $(2i_1-1)\pi/2\tau_k$, we have
$$
\left|\frac{\tau_k}{\tau_j}-\frac{2i_2-1}{2i_1-1}\right|\leq \OOO{\frac{1}{(2i_1-1)^3}}.
$$
Therefore, we must have the existence of an infinite number of integer couples $(i_1,i_2)$ satisfying the above inequality. However, by Thue-Siegel-Roth Theorem~\cite{roth-55}, for any irrational algebraic number $\alpha$, and for any $\delta>0$, the inequality
$$
\left|\alpha-p/q\right|<1/|q|^{2+\delta},
$$
has only a finite number of integer solutions $p,q$ ($q\neq 0$). This, trivially, leads to a contradiction and therefore there exists $M>0$ such that for $i_1,i_2>M$ and $j\neq k$, $\lambda^j_{i_1}\neq\lambda^k_{i_2}$ and $\lambda^j_{i_1}\neq\tilde\lambda^k_{i_2}$.

For the $M$ first eigenvalues on each branch, we apply the perturbation theory for linear operators~\cite{kato-book-80}. Having $j$, the branch index, fixed, we will show that for $\epsilon^j_2>0$ small enough, if $\|q\|_{L_\infty},\|\tilde q\|_{L_\infty}<\epsilon^j_2$ then the quantities $(\lambda_i^j)_{i=1}^M$  do note coincide with the quantities $(\lambda_i^k)_{i=1}^\infty$, where $k\neq j$. In fact, for the case of $q=\tilde q=0$, this is, trivially, a  consequence of the fact that the branch lengthes $\tau_j$ and $\tau_k$ are 2-by-2 $\QQ$-linearly independent. Now, adding a small perturbation, $q$ or $\tilde q$, in the generalized sense (see page 206~\cite{kato-book-80}), by the continuity of a finite system of eigenvalues (see page 213~\cite{kato-book-80}), this claim remains valid. Thus, there exists a small enough $\epsilon_2=\min_{j=1,\cdots,N}(\epsilon_2^j)$ such that, if $\|q\|_{L_\infty},\|\tilde q\|_{L_\infty}<\epsilon_2$ then $(\lambda_i^j)_{i=1}^M$ do not coincide with  $(\lambda_i^k)_{k=1}^\infty$ , for all $j,k=1,\cdots,N$ verifying $j\neq k$.
\end{proof}
Applying Lemma~\ref{lem:separation} to the \Eref{eq:aux5}, for $\|q\|_{L^\infty},\|\tilde q\|_{L^\infty}<\epsilon:=\min(\epsilon_1,\epsilon_2)$, we have:
\begin{equation*}
\fl\qquad\int_0^{\tau_j} \hat q_j(x) \varphi^j_{\NNN}(x,\lambda_i^j)\tilde \varphi^j_{\NNN}(x,\lambda_i^j) dx=0, \qquad \forall j=1,\cdots,N,~~~ \forall i=1,2,\cdots.
\end{equation*}
Using the integral representation with a Volterra kernel \eref{eq:Volterra}, we rewrite\small
\begin{eqnarray*}
\fl\int_0^{\aaa_j}\hat{\q}_j(x)\varphi^j_\NNN(x,\lambda^j_i) \tilde{\varphi}_\NNN^j(x,\lambda^j_i)dx =\\
\fl\qquad\qquad\int_0^{\aaa_j} \hat{\q}_j(x)\left(\cos^2 \lambda^j_i(\tau_j-x)+\int_x^{\aaa_j}\frac{K_\NNN^j(\tau_j-x,\tau_j-t)}{2}\cos 2\lambda^j_i(\tau_j-t) d t
\right) dx=\\
\fl\qquad\qquad\int_0^{\aaa_j} \hat{\q}_j(x)\left(\cos 2\lambda^j_i (\tau_j-x)+\int_x^{\aaa_j}\frac{K_\NNN^j(\tau_j-x,\tau_j-t)}{2}\cos 2\lambda^j_i (\tau_j-t) d t
\right) dx=\\
\fl\qquad\qquad\int_0^{\aaa_j} \left(\hat{\q}_j(t)+\int_0^{t}\frac{K_\NNN^j(\tau_j-t,\tau_j-x)}{2}\hat{\q}_j(x)dx \right)
 \cos 2\lambda^j_i(\tau_j-t )d t
\end{eqnarray*}\normalsize
where we have applied the assumption $\textbf{B2}$, $\int_0^{\tau_j} \hat q_j=0$, for the passage from the second to the third line and the Fubini Theorem for the last passage. This implies that
$$
\int_0^{\aaa_j} A^j(t) \cos 2\lambda^j_i(\tau_j-t) dt=0\qquad \forall i=1,2,\cdots,
$$
where
\begin{equation*}
A^j(t):=\hat{\q}_j( t)+\int_0^{t}\frac{K^j_\NNN(\tau_j-t,\tau_j-x)}{2}\hat{\q}_j(x)dx.
\end{equation*}
By the assumption of Theorem \ref{thm:half}, the potentials $q_j(t)$ are zero on $[0,\tau_j/2]$ and consequently
$A^j(t)= 0$ on $[0,\tau_j/2]$. Therefore, we have
$$
\int_{\aaa_j/2}^{\tau_j} A^j(t) \cos 2\lambda^j_i(\tau_j-t) dt=0\qquad \forall i=1,2,\cdots.
$$
However, by the Proposition 1.8.6 of~\cite{yurko-book}, the system $\{\cos (2\lambda^j_i (\tau_j-.))\}_{i=1,2,\cdots}$ provides a Riesz basis over $L^2[\tau_j/2,\tau_j]$.
Consequently,
$$
\hat{\q}_j( t)+\int_0^{t}\frac{K^j_\NNN(\tau_j-t,\tau_j-x)}{2}\hat{\q}_j(x)dx=0\qquad t\in[0,\tau_j].
$$
This homogenous Volterra integral equation has only  the trivial solution $\hat q_j\equiv 0$ on $[0,\tau_j]$.
This, trivially, implies that for $\epsilon=\min(\epsilon_1,\epsilon_2)$, if $\|q\|_{L^\infty(\Gamma)},\|\tilde q\|_{L^\infty(\Gamma)}<\epsilon$ then $\hat q \equiv 0$ on $\Gamma$.


\subsection{Proof of Theorem \ref{thm:Borg}}\label{ssec:Borg}

We consider two potentials $q=\otimes_{j=1}^N q_j$ and  $\tilde q=\otimes_{j=1}^N \tilde q_j$, satisfying the assumptions of the Theorem~\ref{thm:Borg}. Assuming that they give rise to the same Neumann and Dirichlet reflection coefficients, $R_\NNN(\omega)\equiv \tilde R_\NNN(\omega)$ and $R_\DDD(\omega)\equiv \tilde R_\DDD(\omega)$, we have the characteristic equations~\eref{eq:charfinal} and~\eref{eq:charfinalD}.

We define the operator $\LLL^j_{\DDD}$ exactly as $\LLL^j_{\NNN}$ (defined in the previous subsection) with Dirichlet boundary conditions at 0 and at $\tau_j$. Still, by the perturbation theory for linear operators~\cite{kato-book-80}, there exists $\epsilon_1>0$ small enough such that, if $\|q\|_{L^\infty(\Gamma)},\|\tilde q\|_{L^\infty(\Gamma)}<\epsilon_1$ then the eigenvalues of $\LLL^j_{\NNN}$ and $\LLL^j_{\DDD}$ are all positive.

Considering $((\lambda_i^j)^2)_{i=1}^\infty$ ($\lambda_i^j>0$) the sequence of eigenvalues of $\LLL^j_{\NNN}$, and $((\mu_i^j)^2)_{i=1}^\infty$ ($\mu_i^j>0$) the sequence of eigenvalues of $\LLL^j_{\DDD}$, the equations \eref{eq:charfinal} and \eref{eq:charfinalD} imply:\small
\begin{equation*}
\eqalign{
\fl\prod_{k\neq j} \varphi_\NNN^k(0,\lambda_i^j)\tilde \varphi_\NNN^k(0,\lambda_i^j)\int_0^{\tau_j} \hat q_j(x) \varphi_\NNN^j(x,\lambda_i^j)\tilde \varphi_\NNN^j(x,\lambda_i^j)dx &=0, \nonumber\\
\fl\prod_{k\neq j} \varphi_\DDD^k(0,\mu_i^j)\tilde \varphi_\DDD^k(0,\mu_i^j)\int_0^{\tau_j} \hat q_j(x) \varphi_\DDD^j(x,\mu_i^j)\tilde \varphi_\DDD^j(x,\mu_i^j)dx &=0,}\qquad \forall j=1,\cdots,N, \forall i=1,2,\cdots,
\end{equation*}\normalsize
where we have applied the fact that $\varphi_\NNN^j(0,\lambda_i^j)=\varphi_\DDD^j(0,\mu_i^j)=0$. Following the same arguments as those of the Lemma~\ref{lem:separation}, there exists $\epsilon_2>0$ small enough, such that
$$
\eqalign{\prod_{k\neq j} \varphi_\NNN^k(0,\lambda_i^j)\tilde \varphi_\NNN^k(0,\lambda_i^j)\neq 0,\cr
\prod_{k\neq j} \varphi_\DDD^k(0,\mu_i^j)\tilde \varphi_\DDD^k(0,\mu_i^j)\neq 0,}\qquad j=1,\cdots,N,~~i=1,2,\cdots.
$$
Therefore, for each $j=1,\cdots,N$ we have ,
\begin{equation}\label{eq:aux7}
\fl<\hat q_j(\cdot), \varphi_\NNN^j(\cdot,\lambda_i^j)>_{L^2(0,\tau_j)}=<\hat q_j(\cdot), \varphi_\DDD^j(\cdot,\mu_i^j)>_{L^2(0,\tau_j)}=0,\forall i=1,2,\cdots.
\end{equation}
At this point we define the sequence $(u_i^j)_{j=1,\cdots,N,~i=0,1,2,\cdots}$ as follows
$$
\eqalign{
u_0^j({\textbf x})&=(0,\cdots,0,\underbrace{1}_{j\textrm{-th position}},0,\cdots,0), \qquad{\textbf x}\in \Gamma\cr
u_{2i-1}^j({\textbf x})&=(0,\cdots,0,\underbrace{\varphi^j_{\NNN}(x,\lambda_i^j)\tilde \varphi^j_{\NNN}(x,\lambda_i^j)-\frac{1}{2}}_{j\textrm{-th position}},0,\cdots,0), \qquad{\textbf x}\in \Gamma,~i\geq 1 \cr
u_{2i}^j({\textbf x})&=(0,\cdots,0,\underbrace{\varphi^j_{\DDD}(x,\mu_i^j)\tilde \varphi^j_{\DDD}(x,\mu_i^j)-\frac{1}{2}}_{j\textrm{-th position}},0,\cdots,0),\qquad{\textbf x}\in \Gamma,~i\geq 1}
$$
As $\int_0^{\tau_j} \hat q_j(s) ds=0$ by Assumption $\textbf{B2}$,~\eref{eq:aux7} implies
$$
<\hat q(\cdot),u_i^j(\cdot)>_{L^2(\Gamma)}=0,\qquad \forall j=1,\cdots,N,\forall i=0,1,2,\cdots.
$$
where
$$
<v(\cdot),w(\cdot)>_{L^2(\Gamma)}=\sum_{j=1}^N <v^j(\cdot),w^j(\cdot)>_{L^2(0,\tau_j)}.
$$
However, by Lemma~\ref{lem:riesz2} proved in the Appendix, the sequence $(u_i^j)_{j=1,\cdots,N,~i=0,1,2,\cdots}$ forms a Riesz basis over $L^2(\Gamma)$ and therefore $\hat q=0$, which finishes the proof of Theorem~\ref{thm:Borg}.

\ack
This work was supported by grants from the French National Research Agency (ANR project 0-DEFECT) and from DIGITEO (project DIAGS).

\appendix
\section{Riesz basis properties}
The goal of this appendix is to provide a Lemma~\ref{lem:riesz2} on the Riesz basis property for the sequences $(u_i^j)_{j=1,\cdots,N,~i=0,1,2,\cdots}$ defined in Subsection~\ref{ssec:Borg}. Note that this is a direct consequence of a result already proved in~\cite{Borg-46}. However, we provide a proof for the sake of completeness.

We first provide a classical result on Riesz sequences~\cite[page 91]{yurko-book}:
\begin{lemma}\label{lem:bar}
Let $\{f_j \}_{j=1}^\infty$ be a Riesz basis for the Banach space $B$. Let $\{g_j \}_{j=1}^\infty$ be quadratically close to $\{f_j \}_{j=1}^\infty$, i.e.
$$
\sum_{j=1}^\infty \|g_j-f_j\|^2<\infty.
$$
Then if the sequence $\{g_j\}_{j=1}^\infty$ is $\omega$-linearly independent or if it is complete in $B$, then it is, also, a Riesz basis for $B$.
\end{lemma}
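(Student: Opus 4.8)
The plan is to work in the Hilbert-space setting in which the notion of a Riesz basis actually lives: $\{g_j\}$ is a Riesz basis for $B$ precisely when it is the image of some orthonormal basis under a bounded invertible operator. So the strategy is to realize $\{f_j\}$ and $\{g_j\}$ as images of one fixed orthonormal basis and to reduce the statement to the Fredholm alternative for a compact perturbation of the identity.

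First I would fix an orthonormal basis $\{e_j\}_{j=1}^\infty$ of $B$ together with a bounded invertible operator $T$ satisfying $Te_j = f_j$ for all $j$ (this is what it means for $\{f_j\}$ to be a Riesz basis). Next, define a linear operator $A$ by $Ae_j := g_j - f_j$; the hypothesis $\sum_j \|g_j - f_j\|^2 < \infty$ is exactly the statement that $A$ is a Hilbert--Schmidt operator, hence bounded and compact. Then $g_j = (T+A)e_j =: Se_j$, where $S = T(I + K)$ with $K := T^{-1}A$ compact. In particular $S$ is bounded, so $\{g_j\}$ is automatically a Bessel sequence: for every $(c_j) \in \ell^2$ the series $\sum_j c_j g_j$ converges and equals $S\!\left(\sum_j c_j e_j\right)$.

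The heart of the argument is the translation of the two alternative hypotheses into properties of $I+K$. Since $T$ is a linear homeomorphism, $\{g_j\}$ is complete in $B$ if and only if $S$ has dense range, if and only if $I+K$ has dense range; and, using the Bessel property just noted, $\{g_j\}$ is $\omega$-linearly independent if and only if $S$ is injective, if and only if $I+K$ is injective (an element of $\ker S$ is an $\ell^2$-combination $\sum c_j e_j$ with $\sum c_j g_j = 0$). Now I invoke the Fredholm alternative for $I+K$ with $K$ compact: this operator has index zero and closed range, so for it injectivity, surjectivity, density of range and invertibility are all equivalent. Consequently, under either hypothesis of the lemma $I+K$ is invertible, hence $S = T(I+K)$ is bounded and boundedly invertible, and $\{g_j = Se_j\}$ is the image of the orthonormal basis $\{e_j\}$ under a bounded invertible operator, i.e. a Riesz basis for $B$ — which is exactly the form in which the lemma is applied in Subsection~\ref{ssec:Borg}.

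I expect the only subtle point to be the bookkeeping in the previous paragraph, namely that ``$\omega$-linearly independent'' really does amount to injectivity of $S$: this needs the Bessel estimate so that the formal sums $\sum c_j g_j$ converge and can be factored as $S(\sum c_j e_j)$. Everything else — that a Hilbert--Schmidt operator is compact, and the Fredholm alternative (closed range, index zero) for $I+K$ — is completely standard. An essentially equivalent but more computational route would discard the finitely many indices for which $\|g_j - f_j\|$ fails to be small and then run a Neumann-series perturbation estimate on the tail; I would prefer the operator-theoretic version above, as it is shorter and cleaner.
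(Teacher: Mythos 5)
The paper does not actually prove this lemma: it is quoted as a classical result with a citation to~\cite[page 91]{yurko-book} and then used as a black box in the proof of Lemma~\ref{lem:riesz2}, so there is no in-paper argument to compare yours against. Your operator-theoretic proof is correct and is essentially the standard textbook argument behind the cited result: write $f_j=Te_j$ with $T$ bounded invertible and $\{e_j\}$ orthonormal, observe that quadratic closeness makes $A e_j = g_j - f_j$ Hilbert--Schmidt (hence compact), factor $g_j = T(I+K)e_j$ with $K=T^{-1}A$ compact, and apply the Fredholm alternative so that either injectivity (from $\omega$-independence) or dense range (from completeness, combined with closed range) forces $I+K$ to be invertible. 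Two minor remarks. First, the lemma as stated speaks of a ``Banach space $B$'', but Riesz bases are a Hilbert-space notion and the lemma is only applied in $L^2(0,\tau)$, so your tacit restriction to the Hilbert setting is the correct reading rather than a loss of generality. Second, your parenthetical ``if and only if'' between $\omega$-linear independence and injectivity of $S$ is stronger than needed: the converse direction would require handling coefficient sequences not known a priori to be in $\ell^2$, but your argument only uses the forward implication (an $\ell^2$ kernel element of $S$ yields a norm-convergent vanishing combination of the $g_j$ via the Bessel bound), so this does not affect the validity of the proof.
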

Throughout the Appendix, we consider the sequence $(u_i^j)$ to be defined as in Subsection~\ref{ssec:Borg}.
We therefore have the following lemma:
\begin{lemma}\label{lem:riesz2}
Assuming $q$ and $\tilde q$ in $H^1(\Gamma)$, the sequence $(u_i^j)_{j=1,\cdots,N,~i=0,1,2,\cdots}$ provides a Riesz basis over the Hilbert space $L^2(\Gamma)$.
\end{lemma}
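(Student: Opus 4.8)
The plan is to reduce the claim to the classical single-interval result via Lemma~\ref{lem:bar}, branch by branch. Since the sequence $(u_i^j)$ is block-diagonal — $u_i^j$ is supported only on the $j$-th branch — it suffices to show that for each fixed $j$, the family $\{u_0^j,\,u_{2i-1}^j,\,u_{2i}^j\}_{i\ge 1}$, viewed as functions on $[0,\tau_j]$ (i.e. $1$, $\varphi^j_{\NNN}(x,\lambda_i^j)\tilde\varphi^j_{\NNN}(x,\lambda_i^j)-\tfrac12$, and $\varphi^j_{\DDD}(x,\mu_i^j)\tilde\varphi^j_{\DDD}(x,\mu_i^j)-\tfrac12$), is a Riesz basis of $L^2(0,\tau_j)$; a finite orthogonal direct sum of Riesz bases is a Riesz basis of the direct sum $L^2(\Gamma)=\oplus_{j=1}^N L^2(0,\tau_j)$. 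So from now on fix $j$ and drop it from the notation, writing $\tau=\tau_j$.

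The second step is to produce a reference Riesz basis and show the given family is quadratically close to it. The natural candidate is the shifted trigonometric system: $f_0\equiv 1$ and, for $i\ge 1$, $f_{2i-1}(x)=\cos\bigl(2\cdot\tfrac{(2i-1)\pi}{2\tau}(\tau-x)\bigr)=\cos\bigl(\tfrac{(2i-1)\pi}{\tau}(\tau-x)\bigr)$ and similarly $f_{2i}$ built from the Dirichlet asymptotic frequencies $\tfrac{i\pi}{\tau}$ (the $\tau$-periodic completion of $\{1,\cos,\sin\}$ on an interval of length $\tau$, which is an orthogonal basis of $L^2(0,\tau)$ after accounting for the shift). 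Using the Volterra integral representation \eref{eq:Volterra} one gets, exactly as in \eref{eq:lim},
$$
\varphi_{\NNN}(x,\omega)\tilde\varphi_{\NNN}(x,\omega)-\tfrac12 = \cos^2(\omega(\tau-x))-\tfrac12+\OOO{1/\omega}=\tfrac12\cos(2\omega(\tau-x))+\OOO{1/\omega},
$$
uniformly in $x\in[0,\tau]$, and the analogous identity in the Dirichlet case with $\cos$ replaced by $\sin$. Combined with the $H^1$-eigenvalue asymptotics $\lambda_i=\tfrac{(2i-1)\pi}{2\tau}+\OOO{1/i}$ and $\mu_i=\tfrac{i\pi}{\tau}+\OOO{1/i}$ (Remark 1.1.1 of \cite{yurko-book}; here even without assumption $\textbf{B2}$ the $\OOO{1/i}$ error is enough), one obtains $\|u^j_{2i-1}-\tfrac12 f_{2i-1}\|_{L^2(0,\tau)}=\OOO{1/i}$, and likewise for the even-indexed terms. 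Hence $\sum_i\|u^j_i-\tfrac12 f_i\|^2<\infty$, so the family is quadratically close to the (rescaled, still Riesz) reference basis $\{\tfrac12 f_i\}$.

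The third and final step is to supply the extra hypothesis needed in Lemma~\ref{lem:bar}: either $\omega$-linear independence or completeness of $(u^j_i)_i$ in $L^2(0,\tau)$. I would argue completeness: if $g\in L^2(0,\tau)$ is orthogonal to every $u^j_i$, then $\langle g,1\rangle=0$ gives $\int_0^\tau g=0$, and then $\langle g,u^j_{2i-1}\rangle=0$ forces $\langle g,\varphi_{\NNN}(\cdot,\lambda_i)\tilde\varphi_{\NNN}(\cdot,\lambda_i)\rangle=0$ for all $i$, and similarly in the Dirichlet case. This is precisely the orthogonality relation \eref{eq:aux7} with $\hat q$ replaced by $g$; the argument of the proof of Theorem~\ref{thm:half} (peeling off the Volterra kernel via \eref{eq:Volterra}, using that $\{\cos(2\lambda_i(\tau-\cdot))\}_i$ is a Riesz basis of $L^2[0,\tau]$ by Proposition 1.8.6 of \cite{yurko-book}, reducing to a homogeneous Volterra equation with only the trivial solution) then yields $g\equiv 0$. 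Therefore $(u^j_i)_i$ is complete, Lemma~\ref{lem:bar} applies, and $(u^j_i)_i$ is a Riesz basis of $L^2(0,\tau)$; summing over $j$ finishes the proof.

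The main obstacle is the book-keeping in the quadratic-closeness estimate: one must track the $\OOO{1/\omega}$ terms in the Volterra representation together with the $\OOO{1/i}$ spacing of the eigenvalues and check that $\|\cos(2\lambda_i(\tau-\cdot))-\cos(2\tfrac{(2i-1)\pi}{2\tau}(\tau-\cdot))\|_{L^2}=\OOO{1/i}$, so that the whole difference is square-summable; the completeness step, by contrast, is essentially a rerun of the Theorem~\ref{thm:half}/Theorem~\ref{thm:Borg} computation and needs no new idea.
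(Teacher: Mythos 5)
Your reduction to a single branch and your quadratic-closeness estimate (products close to $\tfrac12\cos(2\lambda_i^j(\tau_j-x))$ and $\tfrac12\cos(2\mu_i^j(\tau_j-x))$, which together with the constant approximate the full cosine system on $[0,\tau_j]$) are sound and agree with the paper. The gap is in your completeness step. You invoke ``$\{\cos(2\lambda_i^j(\tau_j-\cdot))\}_i$ is a Riesz basis of $L^2[0,\tau_j]$ by Proposition 1.8.6 of \cite{yurko-book}'', but that is false: since $2\lambda_i^j\approx(2i-1)\pi/\tau_j$, this family is quadratically close to the odd cosines only, which span a proper closed subspace of $L^2(0,\tau_j)$ (e.g.\ $\cos(2\pi x/\tau_j)$ is orthogonal to all of them); the cited proposition is used in the paper precisely to get a Riesz basis of $L^2[\tau_j/2,\tau_j]$, which is why Theorem~\ref{thm:half} needs the hypothesis $\hat q_j=0$ on $[0,\tau_j/2]$. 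Here $g$ is an arbitrary element of $L^2(0,\tau_j)$, so rerunning the Theorem~\ref{thm:half} computation with the Neumann products gives only $\int_0^{\tau_j}B_{\NNN}(t)\cos(2\lambda_i^j(\tau_j-t))\,dt=0$ with $B_{\NNN}=(I+V_{\NNN})g$, which does not force $B_{\NNN}=0$. Nor can you simply ``add the Dirichlet case'': the Dirichlet relations involve a different Volterra transform $B_{\DDD}=(I+V_{\DDD})g$ built from the kernel $K^j_{\DDD}\neq K^j_{\NNN}$, so the two orthogonality families concern two different functions and cannot be concatenated into orthogonality of one function against a complete system. As it stands, the extra hypothesis of Lemma~\ref{lem:bar} (completeness or $\omega$-linear independence) is not established, and this is the heart of the lemma, not book-keeping.

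The paper closes exactly this gap with Borg's device, which is the idea missing from your plan: the normalized derivatives $v_i=\frac{\tau}{i\pi}\frac{d}{dx}u_i$ of \emph{all} the products --- Neumann and Dirichlet alike --- satisfy one and the same integro-differential equation \eref{eqn:w} with boundary conditions $v_i(0)=v_i(\tau)=0$, the spectral parameters being $4\lambda_i^2$ and $4\mu_i^2$. The adjoint problem then furnishes a bi-orthogonal sequence for the combined family, giving its $\omega$-linear independence; together with quadratic closeness to $\{\sin(i\pi x/\tau)\}$ and Lemma~\ref{lem:bar}, $\{v_i\}$ is a Riesz basis, hence complete. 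Completeness of $\{u_i\}$ then follows by an integration by parts (using $u_0$ to kill the mean of $f$), and a final application of Lemma~\ref{lem:bar} with the cosine system gives the Riesz basis property. You would need to incorporate this (or some substitute argument treating the two product families jointly) for your proof to go through.
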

\begin{proof}[Proof of Lemma~\ref{lem:riesz2}]
Fixing the branch index $j$, we prove that $\{u^j_i(x)\}_{i=0}^\infty$ is a Riesz basis over $L^2(0,\aaa_j)$. Here, for simplicity sakes, we have identified the vector $u^j_i(x)$ with its $j$-th component and therefore $u^j_i(x)$ is not a vector anymore but rather a function in $L^2(0,\tau_j)$. We also remove the branch index $j$.

First, we prove that
$\{v_i(x):=\frac{\tau}{i\pi}\frac{d}{dx}u_i(x)\}_{i=1}^{\infty}$ is a Riesz basis over $L^2(0,\aaa)$. Note that
\begin{eqnarray*}
\fl\quad v_i(0)=\frac{\tau}{i\pi}\frac{d}{dx}(\varphi_{\NNN,\DDD})(x,\lambda_i)\tilde \varphi_{\NNN,\DDD}
(x,\lambda_i)\Big|_{x=0}
+\frac{\tau}{i\pi}\frac{d}{dx}(\tilde\varphi_{\NNN,\DDD})(x,\lambda_i)\varphi_{\NNN,\DDD}
(x,\lambda_i)\Big|_{x=0}=0\\
\fl\quad v_i(\tau_j)=\frac{\tau}{i\pi}\frac{d}{dx}(\varphi_{\NNN,\DDD})(x,\lambda_i)\tilde \varphi_{\NNN,\DDD}
(x,\lambda_i)\Big|_{x=\tau}
+\frac{\tau}{i\pi}\frac{d}{dx}(\tilde\varphi_{\NNN,\DDD})(x,\lambda_i)\varphi_{\NNN,\DDD}
(x,\lambda_i)\Big|_{x=\tau}=0.
\end{eqnarray*}
In fact, one can easily see that the functions $v_i(x)$ are solutions of
\begin{equation}\label{eqn:w}
\fl\qquad\qquad-v_i''+2(q(x)+\tilde q(x))v_i +\int_0^x N(x,s) v_i(s)ds=4\lambda_i^2 v_i,\qquad x\in(0,\tau),
\end{equation}
where
$$
N(x,s)=q'(x)+\tilde q'(x)+(q(x)-\tilde q(x))\int_s^x(\tilde q(\xi)-q (\xi))d\xi.
$$
In particular, we obtain that the functions $\{v_i(x)\}_{i=1}^\infty$ are the eigenfunctions of \eref{eqn:w} with boundary conditions $v_i(0)=v_i(\tau)=0$.

Then there exists a bi-orthonormal sequence $\{v^{*}_i(x)\}_{i=1}^\infty$ in $L^2[0,\aaa]$, eigenfunctions
of the adjoint operator
$$
-(v_i^*)''+2(q(x)+\tilde q(x))v_i^* +\int_x^{\tau} N(s,x) v_i^*(s)ds=4\lambda_i^2 v_i^*,\quad v^*(0)=v^*(\tau)=0.
$$
Thus $\{v_i(x)\}_{i=1}^\infty$ are $\omega$-linearly independent.\\
From the integral representation of fundamental solutions and from Lemma~\ref{lem:fund},
we have
$$v_i(x)=\sin(\frac{i\pi x}{\tau})+\OOO{\frac{1}{i}}\qquad\textrm{as }i\rightarrow\infty.$$
So
$$
\sum_{i=1}^\infty\|v_i(x)-\sin(\frac{i\pi x}{\tau})\|_{L^2(0,\tau)}^2<\infty
$$
i.e. $\{v_i(x)\}_{i=1}^\infty$ is quadratically close to $\{\sin(\frac{i\pi x}{\tau})\}_{i=1}^\infty$. By the virtue of the Lemma \ref{lem:bar}, this gives that $\{v_i(x)\}_{i=1}^\infty$ is a Riesz basis over $L^2(0,\tau)$.

Now let us show that the sequence of the functions $\{u_i(x)\}_{i=1}^{\infty}$ is complete in $L^2(0,\tau)$.
Indeed, for any $f\in L^2(0,\tau)$,
suppose that
$$
\int_0^{\tau}f(x) u_i(x)dx=0,\quad i=0,1,2,\cdots .
$$
In particular, considering $u_0$, we have
$$
\int_0^{\tau}f(x)dx=0,
$$
and therefore,
$$
\int_0^{\tau}f(x) (u_i(x)+1/2)dx=0,\quad i=0,1,2,\cdots .
$$
Integrating by part, we have
$$
\left.\left( \int_x^{\tau}f(s)ds\right)(u_i(x)+1/2)\right|_{x=0}^{x=\tau}-\int^{\tau}_0 v_i(x)\left( \int_x^{\tau}f(s)ds\right)dx=0,
\quad i=1,2,\cdots,
$$
and therefore using the boundary condition $u_i(0)+1/2=0$,
$$
\int^{\tau}_0 v_i(x)\left( \int_x^{\tau}f(s)ds\right)dx=0,\qquad i=1,2,\cdots\\
$$
The sequence $\{v_i(x)\}_{i=1}^\infty$ is complete, hence $\int_x^{\tau}f(s)ds=0$ for $x\in [0,\tau]$. Therefore
$\{u_i(x)\}_{i=1}^\infty$ is complete in $L^2(0,\tau)$.
Since $\{\eta_i u_i(x)\}_{i=0}^\infty$ (where $\eta_i$'s are appropriate normalizing constants) is quadratically close to $\{\cos(\frac{i\pi x}{\tau})\}_{i=0}^\infty$, it follows from Lemma \ref{lem:bar} that
$\{u_i(x)\}_{i=0}^\infty$ is a Riesz basis for $L^2(0,\tau)$.
\end{proof}

\section*{References}
\bibliographystyle{plain}

\begin{thebibliography}{10}

\bibitem{avdonin-kurasov-08}
S.~Avdonin and P.~Kurasov.
\newblock Inverse problems for quantum trees.
\newblock {\em Inverse problems and imaging web}, 1:1--21, 2008.

\bibitem{besicovitch-54}
A.S. Besicovitch.
\newblock {\em Almost periodic functions}.
\newblock Dover, Cambridge, 1954.

\bibitem{Borg-46}
G.~Borg.
\newblock Eine {U}mkehrung des {S}turm-{L}iouvilleschen {E}igenweraufgabe.
\newblock {\em Acta Mathematica}, 78:1--96, 1946.

\bibitem{carlson-98}
R.~Carlson.
\newblock Adjoint and self-adjoint differential operators on graphs.
\newblock {\em Electronic Journal of Differential Equations}, 6:1--10, 1998.

\bibitem{yurko-book}
G.~Freiling and V.~Yurko.
\newblock {\em Inverse {S}turm-{L}iouville problems and their applications}.
\newblock NOVA Science Publishers, 2008.

\bibitem{Gerasimenko-88}
N.I. Gerasimenko.
\newblock The inverse scattering problem on a noncompact graph.
\newblock {\em Theoret. and Math. Phys.}, 75(2):230--240, 1988.

\bibitem{Gerasimenko-87}
N.I. Gerasimenko and B.S. Pavlov.
\newblock A scattering problem on noncompact graphs.
\newblock {\em Theoret. and Math. Phys.}, 74:345--359, 1988.

\bibitem{harmer-02}
M.~Harmer.
\newblock Inverse scattering for the matrix {S}chr\"odinger operator and
  {S}chr\"odinger operator on graphs with general self-adjoint boundary
  conditions.
\newblock {\em ANZIAM J.}, 43:1--8, 2002.

\bibitem{harmer-05}
M.~Harmer.
\newblock Inverse scattering on matrices with boundary conditions.
\newblock {\em J. Phys. A:Math. Gen.}, 38:4875--4885, 2005.

\bibitem{Jaulent-82}
M.~Jaulent.
\newblock The inverse scattering problem for {LCRG} transmission lines.
\newblock {\em J. Math. Phys.}, 23(12):2286--2290, 1982.

\bibitem{kato-book-80}
T.~Kato.
\newblock {\em Perturbation Theory for Linear Operators}.
\newblock Springer, 1980.

\bibitem{Kay-72}
I.~Kay.
\newblock The inverse scattering problems for transmission lines.
\newblock In L.~Collin, editor, {\em Mathematics for Profile inversion-NASA
  Tech. Mem. TM X-62}, volume 150, pages 6--2--6--17, 1972.

\bibitem{kurasov-stenberg}
P.~Kurasov and F.~Stenberg.
\newblock On the inverse scattering problem on branching graphs.
\newblock {\em J. Phys. A}, 35(1), 2002.

\bibitem{marchenko-book}
V.A. Marchenko.
\newblock {\em Sturm-{L}iouville operators and applications}.
\newblock Birkhauser Verlag, 1986.

\bibitem{pivovarchik-07}
V.~Pivovarchik.
\newblock Inverse problem for the {Sturm-Liouville} equation on a star-shaped
  graph.
\newblock {\em Math. Nachr.}, 280:1595--1619, 2007.

\bibitem{roth-55}
K.F. Roth.
\newblock Rational approximations to algebraic numbers.
\newblock {\em Mathematika}, 2:1--20, 1955.

\bibitem{sorine-riccati}
M.~Sorine and P.~Winternitz.
\newblock {Superposition Laws for Solutions of Differential Matrix Riccati
  Equations Arising in Control Theory}.
\newblock {\em IEEE Trans. Automat. Control}, 30(3):266--272, 1985.

\bibitem{yurko-05}
V.~Yurko.
\newblock Inverse spectral problems for sturm-liouville operators on graphs.
\newblock {\em Inverse problems}, 21(3):1075--1086, 2005.

\bibitem{yurko-07}
V.~Yurko.
\newblock An inverse problem for higher order differential operators on
  star-type graphs.
\newblock {\em Inverse problems}, 23(3):893--903, 2007.

\bibitem{yurko-08}
V.~Yurko.
\newblock Inverse problems for {S}turm-{L}iouville operators on graphs with a
  cycle.
\newblock {\em Oper. Matrices 2}, 4:543--553, 2008.

\end{thebibliography}

\end{document}